\date{}
\author{Mahdi Cheraghchi\thanks{Department of Computer Science, University of Texas at Austin, USA.
Email: \texttt{mahdi@cs.utexas.edu}. Part of work was done while the author was with the
School of Computer and Communication Sciences, Ecole Polytechnique F\'ed\'erale de Lausanne (EPFL),
Switzerland. Research was supported
by the Swiss NSF grant 200020-115983/1 and the ERC Advanced investigator grant 228021 of A.~Shokrollahi. A preliminary
      summary of this work appears (under the same title) in
      proceedings of the 17th International Symposium on Fundamentals of Computation Theory (FCT~2009),
      September 2009 \cite{ref:Che09}.}}  
\title{Noise-Resilient Group Testing: \\ Limitations and Constructions}
\newtheorem{thm}{Theorem}
\newtheorem{coro}[thm]{Corollary}
\newtheorem{lem}[thm]{Lemma}
\newtheorem{prop}[thm]{Proposition}
\theoremstyle{definition}
\newtheorem{defn}[thm]{Definition}
\newcommand{\N}{\mathds{N}}
\newcommand{\eps}{\epsilon}
\renewcommand{\varepsilon}{\epsilon}
\newcommand{\U}{\mathcal{U}}
\newcommand{\cC}{\mathcal{C}}
\newcommand{\Ex}{\mathds{E}}
\newcommand{\supp}{\mathsf{supp}}
\newcommand{\List}{\mathsf{LIST}}
\newcommand{\eqdef}{:=}
\newcommand{\cS}{\mathcal{S}}
\newcommand{\cX}{\mathcal{X}}
\newcommand{\cY}{\mathcal{Y}}
\newcommand{\cB}{\mathcal{B}}
\newcommand{\cM}{{M}}
\newcommand{\cZ}{\mathcal{Z}}
\newcommand{\PI}{p}
\newcommand{\zo}{\{0,1\}}
\newcommand{\dist}{{\mathsf{dist}}}
\newcommand{\poly}{{\mathsf{poly}}}
\newcommand{\qpoly}{{\mathsf{quasipoly}}}
\newcommand{\tre}{{\mathsf{Tre}}}
\newcommand{\wgt}{\mathsf{wgt}}
\newcommand{\extr}{\mathsf{Ext}}
\newcommand{\sm}{\setminus}
\newcommand{\agr}{\mathsf{Agr}}
\newcommand{\TRASH}[1]{}
\providecommand{\eqref}[1]{(\ref{#1})}
\newcommand{\nextLine}{\\}
\newcommand{\tn}{{\tilde{n}}}
\newcommand{\tl}{{\tilde{\ell}}}
\newcommand{\tee}{t}
\providecommand{\cites}[1]{\cite{#1}}
\begin{document}

\maketitle

\begin{abstract}
  We study combinatorial group testing schemes for learning $d$-sparse
  Boolean vectors using highly unreliable disjunctive measurements. We
  consider an adversarial noise model that only limits the number of
  false observations, and show that any noise-resilient scheme in this
  model can only approximately reconstruct the sparse vector. On the
  positive side, we take this barrier to our advantage and show that
  approximate reconstruction (within a satisfactory degree of
  approximation) allows us to break the information theoretic lower
  bound of $\tilde{\Omega}(d^2 \log n)$ that is known for exact
  reconstruction of $d$-sparse vectors of length $n$ via non-adaptive
  measurements, by a multiplicative factor $\tilde{\Omega}(d)$.

  Specifically, we give simple randomized constructions of
  non-adaptive measurement schemes, with $m=O(d \log n)$ measurements,
  that allow efficient reconstruction of $d$-sparse vectors up to
  $O(d)$ false positives even in the presence of $\delta m$ false
  positives and $O(m/d)$ false negatives within the measurement 
  outcomes, for \emph{any} constant $\delta < 1$.  We show that,
  information theoretically, none of these parameters can be
  substantially improved without dramatically affecting the others.
  Furthermore, we obtain several explicit constructions, in particular
  one matching the randomized trade-off but using $m = O(d^{1+o(1)}
  \log n)$ measurements. We also obtain explicit constructions that
  allow fast reconstruction in time $\poly(m)$, which would be
  sublinear in $n$ for sufficiently sparse vectors. The main tool used
  in our construction is the list-decoding view of randomness
  condensers and extractors.

  An immediate consequence of our result is an adaptive scheme that
  runs in only two non-adaptive \emph{rounds} and exactly reconstructs
  any $d$-sparse vector using a total $O(d \log n)$ measurements, a
  task that would be impossible in one round and fairly easy in
  $O(\log(n/d))$ rounds.

  \noindent \emph{Keywords:} Group Testing, Randomness Condensers,
  Extractors, List Decoding.
\end{abstract}

\newpage

\section{Introduction}

Group testing is an area in applied combinatorics that deals with the
following problem: Suppose that in a large population of individuals,
it is suspected that a small number of the individuals (known as \emph{defectives},
or \emph{positives}) are affected
by a condition that can be detected by carrying out a particular test
(for example, a disease that can be diagnosed by testing blood samples). 
Moreover suppose that
a \emph{pooling strategy} is permissible, namely, that it is possible
to perform a test on a chosen group of individuals, in
which case the outcome of the test would be positive if at least one
of the individuals in the group possesses the condition (for example,
performing a test on a mixture of blood samples results in positive
if at least one of the samples is positive).  The trivial
strategy would be to test each individual separately, which takes as
many tests as the population size. The basic question in group testing
is: how can we do better?  

This question is believed to be first posed
by Dorfman \cite{ref:Dor43} during the screening process of draftees
in World War~II. In this scenario, blood samples are drawn from a
large number of people which are tested for a particular disease.  If
a number of samples are pooled in a group, on which the test is
applied, the outcome would be positive if at least one of the samples
in the group carries a particular antigen that certifies the disease. 
Since then, group testing has been applied for a wide range of purposes,
from testing for defective items (e.g., defective light bulbs or resistors) as a part of industrial
quality assurance \cite{ref:SG59} to DNA sequencing \cite{ref:PL94} and
DNA library screening in molecular biology (see, e.g.,
\cites{ref:ND00,ref:STR03,ref:Mac99,ref:Mac99b,ref:CD08} and the references therein), and less
obvious applications such as multiaccess communication \cite{ref:Wol85},
data compression \cite{ref:HL00}, pattern matching \cite{ref:CEPR07},
streaming algorithms \cite{ref:CM05}, 
software testing \cite{ref:BG02}, and compressed sensing \cite{ref:CM06},
to name a few.  Moreover, over the decades, a vast
amount of tools and techniques has been developed for various settings
of the problem. We refer the interested reader to the books by Du and Hwang
\cites{ref:groupTesting,ref:DH06} for a detailed account of the major
developments in this area.

More formally, the basic goal in group testing is to reconstruct a
$d$-sparse\footnote{We define a $d$-sparse vector as a vector whose
  number of nonzero coefficients is at most $d$.} Boolean vector
  $x \in \zo^n$, for a known integer parameter $d > 0$, from a set of
observations. Each observation is the outcome of a measurement that
outputs the bitwise ``or'' (disjunction) of a prescribed subset of the coordinates in
$x$. Hence, a measurement can be seen as a binary vector in $\zo^n$
which is the characteristic vector of the subset of the coordinates
being combined together. More generally, a set of $m$ measurements can
be seen as an $m \times n$ binary matrix (that we call the
\emph{measurement matrix}) whose rows define the individual
measurements.

In this work we study group testing in presence of highly unreliable
measurements that can produce false outcomes. We will mainly focus on
situations where up to a constant fraction of the measurement
outcomes can be incorrect.  Moreover, we will mainly restrict our
attention to \emph{non-adaptive} measurements; the case in which the
measurement matrix is fully determined before the observation outcomes
are known. Non-adaptive measurements are particularly important for
applications as they allow the tests to be performed independently and in
parallel, which saves significant time and cost.  

On the negative side, we show that when the measurements are allowed
to be highly noisy, the original vector $x$ cannot be uniquely
reconstructed. Thus in this case it would be inevitable to resort to
approximate reconstructions, i.e., producing a sparse vector $\hat{x}$
that is close to the original vector in Hamming distance. The reconstruction
can err by producing \emph{false positives} (i.e., a position at which
$\hat{x}$ is $1$ but $x$ is $0$), or \emph{false negatives} (where $\hat{x}$
is $0$ but $x$ is $1$).
In particular, our result shows that if a constant fraction of the
measurements can go wrong, the reconstruction might be different from
the original vector in $\Omega(d)$ positions, irrespective of the
number of measurements.  For most applications this might be an
unsatisfactory situation, as even a close estimate of the set of
positives might not reveal whether any particular individual is
defective or not, and in certain scenarios (such as an epidemic
disease or industrial quality assurance) it is unacceptable to miss
any affected individuals.  This motivates us to focus on approximate
reconstructions with \emph{one-sided} error.  Namely, we will require
that the support of $\hat{x}$ contains the support of $x$ and be
possibly larger by up to $O(d)$ positions. It can be argued that, for
most applications, such a scheme is as good as exact reconstruction,
as it allows one to significantly narrow-down the set of
defectives to up to $O(d)$ \emph{candidate positives}. In particular, as observed in
\cite{ref:Kni95}, one can use a \emph{second stage} if necessary and
individually test the resulting set of candidates to identify the exact set of positives,
hence resulting in a so-called \emph{trivial two-stage} group testing algorithm.
Next, we will show that
in any scheme that produces no or few false negatives in the reconstruction,
only up to $O(1/d)$ fraction of false negatives
(i.e., observation of a $0$ instead of $1$) in the measurements can be
tolerated, while there is no such restriction on the amount of
tolerable false positives.  Thus, one-sided approximate reconstruction
breaks down the symmetry between false positives and false negatives
in our error model.

On the positive side, we give a general construction for
noise-resilient measurement matrices that guarantees approximate
reconstruction up to $O(d)$ false positives.  
Our main result is a general reduction from 
the noise-resilient group testing problem to construction of 
well-studied combinatorial objects known as \emph{randomness condensers} that 
play an important role in theoretical computer science. Different
qualities of the underlying condenser correspond to different qualities
of the resulting group testing scheme, as we describe later.
Using the state of the art constructions of randomness condensers,
we obtain different instantiations of our framework with incomparable
properties, as summarized in Table~\ref{tab:results}.  
In particular, the resulting randomized constructions (obtained from
optimal lossless condensers and extractors)
can be set to tolerate (with overwhelming probability) \emph{any}
constant fraction ($<1$) of false positives and an $\Omega(1/d)$ fraction
of false negatives, and they are able to produce an accurate reconstruction up to
$O(d)$ false positives (where the positive constant behind $O(\cdot)$
can be made arbitrarily small), which is the best trade-off one can
hope for, all using only $O(d \log n)$ measurements. This almost
matches the information-theoretic lower bound $\Omega(d \log (n/d))$
shown by simple counting. We will also show explicit (deterministic)
constructions that can approach the optimal trade-off, and finally,
those that are equipped with fully efficient reconstruction algorithms
with running time polynomial in the number of measurements.

\begin{table}
\label{tab:results}
\begin{center}
\begin{tabular}{|c|c|c|c|l|l|}
\hline
&&&& Det/ & Rec. \\
$m$ & $e_0$ & $e_1$ & $e'_0$ & Rnd & Time \\
\hline
$O(d \log n)$ & $\alpha m$ & $\Omega(m/d)$ & $O(d)$ & Rnd & $O(mn)$ \\
$O(d \log n)$ & $\Omega(m)$ & $\Omega(m/d)$ & $\delta d$ & Rnd & $O(mn)$ \\
$O(d^{1+o(1)} \log n)$ & $\alpha m$ & $\Omega(m/d)$ & $O(d)$ & Det & $O(mn)$ \\
$d \cdot \qpoly(\log n)$ & $\Omega(m)$ & $\Omega(m/d)$ & $\delta d$ & Det & $O(mn)$ \\
$d \cdot \qpoly(\log n)$ & $\alpha m$ & $\Omega(m/d)$ & $O(d)$ & Det & $\poly(m)$ \\
$\poly(d) \poly(\log n)$ & $\poly(d) \poly(\log n)$ & $\Omega(e_0/d)$ & $\delta d$ & Det & $\poly(m)$ \\
\hline
\end{tabular}
\end{center}
\caption{A summary of constructions in this paper. The parameters $\alpha \in [0,1)$ and
  $\delta \in (0, 1]$ are arbitrary constants, $m$ is the number of measurements,
  $e_0$ (resp., $e_1$) the number of tolerable false positives (resp., negatives) in the measurements,
  and $e'_0$ is the number of false positives in the reconstruction. The fifth column shows whether
  the construction is deterministic (Det) or randomized (Rnd), and the last column shows
  the running time of the reconstruction algorithm.}
\end{table}

\vspace{2mm} \noindent {\bf Related Work. }
There is a large body of work in the group testing literature that is related
to the present work; here we briefly discuss a few with the highest relevance. 
The exact group testing problem in the noiseless scenario 
is handled by what is known 
as \emph{superimposed coding} (see \cites{ref:DR83survey,ref:KBT98}) or the closely related concepts
of \emph{cover-free families} or \emph{disjunct matrices}\footnote{These 
notions are naturally extended to the noisy setting, e.g., in \cite{ref:Mac97}.}.

A $d$-superimposed code is a
  collection of binary vectors with the property that from the bitwise ``or''
  of up to $d$ words in the family one can uniquely identify the
  comprising vectors. A $d$-cover-free family is a collection of
  subsets of a universe, none of which is contained in any union of up
  to $d$ of the other subsets. A $d$-disjunct matrix is a binary matrix
  whose columns correspond to characteristic vectors of a $d$-cover-free family.
  These notions turn out to precisely characterize the combinatorial structure
  needed for (worst-case) noiseless group testing.
  
It is known that, even for the noiseless case, exact reconstruction of
$d$-sparse vectors (when $d$ is not too large) requires at least
$\Omega(d^2 \log n / \log d)$ measurements (several proofs of this
fact are known, e.g., \cites{ref:DR83,ref:Rus94,ref:Fue96}).
An important class of superimposed codes is constructed from
combinatorial designs, among which we mention the construction based
on MDS codes given by Kautz and Singleton \cite{ref:KS64}, which, in
the group testing notation, achieves $O(d^2 \log^2 n)$
measurements\footnote{Interestingly, this classical construction can be
regarded as a special instantiation of our framework where a ``bounded
degree univariate polynomial'' is used in place of the underlying
randomness condenser.  However, the analysis and the properties of the
resulting group testing schemes substantially differ for the two
cases, and in particular, the MDS-based construction owes its
properties essentially to the large distance of the underlying
code. In Section~\ref{sec:bitprobe}, we will elaborate in more detail
on this correspondence as well as a connection with the
\emph{bit-probe} model in data structures.}.  
A nearly optimal construction of $d$-disjunct matrices with
$O(d^2 \log n)$ rows (using a derandomized construction of codes
on the Gilbert-Varshamov bound) was obtained by Porat and Rothschild \cite{ref:PR08}.
They also use combinatorial designs based on error-correcting codes with
large distance as their main technical tool\footnote{This construction
is \emph{weakly} explicit, in the sense that we later define in Definition~\ref{def:matrix}.}.
More recently (independently of the initial publication of our work \cite{ref:Che09}), 
Indyk, Ngo, and Rudra gave a randomized construction of $d$-disjunct matrices
that also achieves the optimal $O(d^2 \log n)$ number of measurements \cite{ref:INR10} 
for noiseless group testing. 
Similar to the present work, their technique is based on list-decodable codes
and their construction is equipped with an efficient reconstruction algorithm
(with polynomial running time in the number of measurements). They also
obtain an explicit construction when the sparsity parameter $d$ is small (namely, $d=O(\log n / \log \log n)$).

Approximate reconstruction of sparse vectors up to a small number of
false positives (that is one focus of this work) has been 
studied as a major ingredient of trivial
two-stage schemes \cites{ref:Kni95,ref:Mac99b,ref:BMS00,ref:DBGV05,ref:EGH07,ref:CD08}.
In particular, a generalization of superimposed codes, known as \emph{selectors},
was introduced in \cite{ref:DBGV05} which, roughly speaking, allows
for identification of the sparse vector up to a prescribed number of
false positives. The authors of \cite{ref:DBGV05} give a non-constructive result showing that
there are such (non-adaptive) schemes that keep the number of false positives at
$O(d)$ using $O(d \log(n/d))$ measurements, matching the optimal
``counting bound''. A probabilistic construction of asymptotically optimal selectors
(resp., a related notion of \emph{resolvable matrices}) is given in \cite{ref:CD08}
(resp., \cite{ref:EGH07}), and \cites{ref:Ind02,ref:CK05} give slightly sub-optimal
``explicit'' constructions based on certain expander graphs obtained from dispersers\footnote{
The notion of selectors is useful in a noiseless setting. However, as remarked in \cite{ref:CD08},
it can be naturally extended to include a ``noise'' parameter, and the probabilistic
constructions of selectors can be naturally extended to this case. Nonetheless,
this generalization does not distinguish between false positives and negatives and
the explicit constructions of selectors \cites{ref:Ind02,ref:CK05} cannot be used
in a (highly) noisy setting.
}.

To give a concise comparison of the present work with those listed above, we
mention some of the qualities of the group testing schemes that we will aim to attain:
\begin{enumerate}

\item Low number of measurements. \item Arbitrarily good degree of approximation. \item Maximum
possible noise tolerance. \item Efficient, deterministic construction: As typically
the sparsity $d$ is very small compared to $n$, a measurement matrix must be
ideally \emph{fully explicitly constructible} in the sense that each entry of the
matrix should be computable in deterministic time $\poly(d, \log n)$ (e.g.,
while the constructions in \cites{ref:DBGV05,ref:EGH07,ref:CD08,ref:Ind02,ref:CK05} are
all polynomial-time computable in $n$, they are not fully explicit in this sense). \item Fully efficient
reconstruction algorithm: For a similar reason, the length of the observation vector
is typically far smaller than $n$; thus, it is desirable to have a reconstruction
algorithm that identifies the support of the sparse vector in time polynomial
in the number of measurements (which might be exponentially smaller than $n$).

\end{enumerate}
While the works that we mentioned focus on few of the criteria listed above
(e.g., none of the above-mentioned schemes for approximate group testing
are equipped with a fully efficient reconstruction algorithm), 
our approach can potentially attain \emph{all} at the same time.
As we will see later, using the best known constructions of condensers we 
will have to settle to sub-optimal results in one or more of the aspects above. 
Nevertheless, the fact that any improvement in the construction of condensers would readily
translate to improved group testing schemes (and also the rapid growth of derandomization
theory) justifies the significance of the construction given in this work.



The remainder of the paper is organized as follows. We will continue
 in Section~\ref{sec:prelim} with some preliminaries on the basic
 notions that we will use. In Section~\ref{sec:negative}, we show
 our negative results on the possible trade-offs between the amount of
 tolerable measurement error and the proximity of the
 reconstruction. We introduce our general construction of measurement
 matrices in Section~\ref{sec:construction}, and in Section~\ref{sec:instan} show several
 possible instantiations that achieve the trade-offs listed in
 Table~\ref{tab:results}. Finally, in sections
 \ref{sec:listrec}~and~\ref{sec:bitprobe}, we discuss several notions
 related to our construction, namely, list-recoverable codes,
 combinatorial designs, and the bit-probe model for the set membership
 problem.

\section{Preliminaries} \label{sec:prelim}

For non-negative integers $e_0$ and $e_1$, we say that an ordered pair
of binary vectors $(x, y)$, each in $\zo^n$, are $(e_0, e_1)$-close
(or $x$ is $(e_0, e_1)$-close to $y$) if $y$ can be obtained from $x$
by flipping at most $e_0$ bits from $0$ to $1$ and at most $e_1$ bits
from $1$ to $0$.  Hence, such $x$ and $y$ will be $(e_0+e_1)$-close in
Hamming-distance. Further, $(x, y)$ are called $(e_0, e_1)$-far if
they are not $(e_0, e_1)$-close.  Note that if $x$ and $y$ are seen as
characteristic vectors of subsets $X$ and $Y$ of $[n]$, respectively\footnote{
We use the shorthand $[n]$ for the set $\{1, 2, \ldots, n\}$.},
they are $(|Y \sm X|,|X \sm Y|)$-close.  
Furthermore, $(x,y)$ are $(e_0, e_1)$-close iff $(y,x)$ are $(e_1,
e_0)$-close.  

A group of $m$ non-adaptive measurements for binary
vectors of length $n$ can be seen as an $m \times n$ matrix (that we
call the \emph{measurement matrix}) whose $(i, j)$th entry is~$1$ if and only if
the $j$th coordinate of the vector is present in the disjunction
defining the $i$th measurement.  For a measurement matrix $\cM$, we
denote by $\cM[x]$ the outcome of the measurements defined by $\cM$ on a
binary vector $x$, that is, the bitwise ``or'' of those columns of $\cM$
chosen by the support of $x$. For example, for the measurement matrix
\newcommand{\Y}[1]{\mathbf{#1}}
\[
\cM := 
\begin{pmatrix}
\Y{0}&\Y{0}&1&\Y{1}&0&1&1&0 \\
\Y{1}&\Y{0}&1&\Y{0}&0&1&0&1 \\
\Y{0}&\Y{1}&0&\Y{1}&0&1&0&0 \\
\Y{0}&\Y{0}&0&\Y{0}&1&0&1&1 \\
\Y{1}&\Y{0}&1&\Y{0}&1&1&1&0
\end{pmatrix}
\]
and Boolean vector $x := (1,1,0,1,0,0,0,0)$, we have $\cM[x] = (1,1,1,0,1)$,
which is the bit-wise ``or'' of the columns shown in boldface.

As motivated by our negative results, for the specific setting of
the group testing problem that we are considering in this work, 
it is necessary to give an \emph{asymmetric} treatment that distinguishes 
between inaccuracies due to false positives and false negatives.
Thus, we will work with a notion of error-tolerating
measurement matrices that directly and conveniently captures this requirement,
as given below:

\begin{defn} \label{def:matrix} Let $m, n, d, e_0, e_1, e'_0, e'_1$ be
  integers.  An $m \times n$ measurement matrix $A$ is called $(e_0,
  e_1, e'_0, e'_1)$-correcting for $d$-sparse vectors if, for every $y
  \in \zo^m$ there exists $z \in \zo^n$ (called a \emph{valid
    decoding of $y$}) such that for every $x \in \zo^n$, whenever
  $(x, z)$ are $(e'_0, e'_1)$-far, $(A[x], y)$ are $(e_0, e_1)$-far.
  The matrix $A$ is called
  \emph{fully explicit} (or simply explicit) if each entry of the matrix can be computed in
  time\footnote{We will use this convention since typically
  $m \ll n$. In particular, when the sparsity parameter $d$ is small, $n$ can be exponentially
  larger than $m$.} $\poly(m, \log n)$, and \emph{weakly explicit} if it can
  be computed in time $\poly(m, n)$.
\end{defn}

Intuitively, the definition states
that two measurements are allowed to be confused only if they are
produced from close vectors. In particular, an $(e_0, e_1, e'_0, e'_1)$-correcting matrix gives a
group testing scheme that reconstructs the sparse vector up to
$e'_0$ false positives and $e'_1$ false negatives even in the presence
of $e_0$ false positives and $e_1$ false negatives in the measurement outcome.

Under this notation, unique decoding would be possible using an $(e_0,
e_1, 0, 0)$-correcting matrix if the amount of measurement errors is
bounded by at most $e_0$ false positives and $e_1$ false negatives.
However, when $e'_0+e'_1$ is positive, decoding may result in a bounded
amount of ambiguity, namely, up to $e'_0$ false positives and $e'_1$
false negatives in the decoded sequence.  

The special case of $(0,0,0,0)$-correcting matrices is equivalent to
what known in the combinatorics literature
as \emph{$d$-superimposed codes} or \emph{$d$-separable
  matrices} and is closely related (in fact, equivalent up to an additive constant in the
  parameter $d$) to the notions of
\emph{$d$-cover-free families} and \emph{$d$-disjunct} matrices (as
discussed in the introduction; 
cf.\ \cite{ref:groupTesting} for precise definitions).  Also,
$(0,0,e'_0,0)$-correcting matrices are related to the notion of
\emph{selectors} in \cite{ref:DBGV05} and \emph{resolvable matrices} in \cite{ref:EGH07}.

The basic combinatorial tools that we use in this work are the notions of
randomness condensers and extractors. Here we briefly review the
essential definitions related to these objects. A detailed treatment of
these notions can be found in the standard theoretical computer science literature, 
and in particular, the book by Arora and Barak~\cite{ref:AB09}.

The \emph{min-entropy} of a distribution $\cX$ over a finite support $S$
is given by 
\[  H_\infty(\cX) \nextLine \eqdef 
\min_{x \in S}\{-\log \Pr_\cX(x)\}, \] where $\Pr_\cX(x)$ is the
probability that $\cX$ assigns to $x$, and the logarithm is to base $2$.

The \emph{statistical
  distance} of two distributions $\cX$ and $\cY$ defined over the same
finite space $S$ is given by \[ \frac{1}{2} \sum_{s \in S} |\Pr_\cX(s)
- \Pr_\cY(s)|, \] which is half the $\ell_1$ distance of the two
distributions when regarded as vectors of probabilities over $S$. 
Two distributions $\cX$ and $\cY$ are said to be $\eps$-close if their
statistical distance is at most $\eps$.  

We will use the shorthand
$\U_n$ for the uniform distribution on $\zo^n$, and $X \sim \cX$ for
a random variable $X$ drawn from a distribution $\cX$. The notions
of randomness condenser and extractor are defined as follows.

\begin{defn} \label{def:condenser}
A function $f\colon \zo^n \times \zo^t \to \zo^\ell$ is a
\emph{strong $k \to_\eps k'$ condenser} (or simply a $k \to_\eps k'$ condenser) if for every distribution
$\cX$ on $\zo^n$ with min-entropy at least $k$, random variable $X
\sim \cX$ and a \emph{seed} $Y \sim \U_t$, the distribution of $(Y,
f(X, Y))$ is $\eps$-close to a distribution $(\U_t, \cZ)$ with
min-entropy at least $t+k'$. The parameters $k$, $\eps$, $k - k'$, and
$\ell - k'$ are called the \emph{entropy requirement}, the \emph{error}, the \emph{entropy loss} and
the \emph{overhead} of the condenser, respectively.  A condenser with
zero entropy loss is called a $(k, \eps)$-\emph{lossless} condenser, and a condenser with zero
overhead is called a (strong) \emph{$(k, \eps)$-extractor}.  A condenser
is \emph{explicit} if it is polynomial-time computable.
\end{defn}

\section{Negative Results} \label{sec:negative}

In coding theory, it is possible to construct codes that can tolerate
up to a constant fraction of adversarially chosen errors and still
guarantee unique decoding.  Hence it is natural to ask whether a
similar possibility exists in group testing, namely, whether there is
a measurement matrix that is robust against a constant fraction of
adversarial errors and still recovers the measured vector exactly. 
Below we show that this is not possible\footnote{
We remark that the negative results in this section hold for both
adaptive and non-adaptive measurements.}.

\begin{lem} \label{lem:distance} Suppose that an $m \times n$
  measurement matrix $\cM$ is $(e_0, e_1, e'_0, e'_1)$-resilient for
  $d$-sparse vectors.  Then $(\max\{e_0,e_1\}+1)/(e'_0+e'_1+1) \leq
  m/d$.
\end{lem}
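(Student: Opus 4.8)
The plan is to squeeze out of the correcting property a lower bound on the weight of the disjunction of $d$ suitably chosen columns of $\cM$, which is trivially at most $m$. Throughout I would work in the natural regime $s \le d \le n$, where $s := e'_0+e'_1+1$ and $c := \max\{e_0,e_1\}+1$ (the remaining cases are disposed of directly: for $s>d$ the inequality is immediate under the standing convention $e_0,e_1\le m$, and in any case one checks that in the regime $s\le d$ necessarily $\max\{e_0,e_1\}<m$ by feeding the all-zeros, resp. all-ones, outcome vector to a decoder). Write $g(S):=\wgt(\cM[\mathbf 1_S])$ for $S\subseteq[n]$, i.e.\ the size of the union of the column-supports indexed by $S$; this is a coverage function, hence monotone, subadditive and submodular, and $g(S)\le m$ always. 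Everything rests on one claim.

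\emph{Key claim:} for all $S'\subsetneq S$ with $|S|\le d$ and $|S\setminus S'|=s$ one has $g(S)-g(S')\ge c$. To prove it, put $x:=\mathbf 1_S$, $x':=\mathbf 1_{S'}$ (both $d$-sparse); note $\cM[x']\le\cM[x]$ bitwise and that $\cM[x]$, $\cM[x']$ differ in exactly $g(S)-g(S')$ coordinates. If this number were $\le\max\{e_0,e_1\}$, then taking $y:=\cM[x]$ when $e_0\ge e_1$ and $y:=\cM[x']$ otherwise makes \emph{both} $\cM[x]$ and $\cM[x']$ be $(e_0,e_1)$-close to $y$ (whichever of the two is not equal to $y$ differs from it by one-directional flips only, at most $e_0$ upward or at most $e_1$ downward). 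A valid decoding $z$ of $y$ is then $(e'_0,e'_1)$-close to both $x$ and $x'$, so from $S\setminus S'\subseteq(S\setminus Z)\cup(Z\setminus S')$ we get $|S\setminus S'|\le e'_1+e'_0=s-1$, contradicting $|S\setminus S'|=s$. (Here I use that $(x,x')$ really are $(e'_0,e'_1)$-far, which holds because passing from $x$ to $x'$ needs $s>e'_1$ downward flips.) Choosing $y$ according to whether $e_0\ge e_1$ is exactly what puts $\max\{e_0,e_1\}$ — rather than, say, just $e_1$ — into the bound.

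Next I would build a \emph{greedy} chain of coordinates: starting from $\emptyset$, repeatedly adjoin the coordinate whose column adds the most not-yet-covered rows, giving $\emptyset=S_0\subset S_1\subset\cdots\subset S_d$ with $|S_k|=k$. Submodularity of $g$ forces the marginal gains $\delta_k:=g(S_k)-g(S_{k-1})$ to be non-increasing, $\delta_1\ge\delta_2\ge\cdots\ge\delta_d\ge0$, while the Key claim applied to $S_{k-s}\subset S_k$ gives $\delta_{k-s+1}+\cdots+\delta_k=g(S_k)-g(S_{k-s})\ge c$ for every $s\le k\le d$ — every window of $s$ consecutive gains sums to at least $c$. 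A one-line arithmetic argument then finishes: in the last window the largest term $\delta_{d-s+1}$ is at least the average, so $\delta_{d-s+1}\ge c/s$; monotonicity propagates $\delta_i\ge c/s$ to all $i\le d-s$, hence $g(S_d)=\sum_{i=1}^d\delta_i\ge(d-s)\cdot\frac cs+c=\frac ds\,c$. Since $m\ge g(S_d)$, this reads $\frac{\max\{e_0,e_1\}+1}{e'_0+e'_1+1}\le\frac md$, as desired.

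The main obstacle — and the reason the naive argument (peel off $\lfloor d/s\rfloor$ disjoint blocks of $s$ coordinates, each costing $\ge c$ by the Key claim) falls short — is obtaining the tight factor $d/s$ rather than $\lfloor d/s\rfloor$ when $s\nmid d$. That gap is precisely where submodularity, via the greedy ordering that makes the gain sequence non-increasing, is essential: the weak $\lfloor d/s\rfloor$ bound corresponds to a gain sequence that spikes and then collapses, which greedy selection rules out. The secondary point needing care is the placement of $\max\{e_0,e_1\}$ in the Key claim, handled by the choice of the target outcome $y$ described above.
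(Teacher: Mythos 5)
Your proof is correct, and the core strategy coincides with the paper's: build a nested chain of $d$-sparse vectors and show that the resilience property forces the encoding's Hamming weight to jump by at least $c := \max\{e_0,e_1\}+1$ across every window of $s := e'_0+e'_1+1$ steps, so the total weight, which is at most $m$, must be large. What you do differently, and what it buys: the paper fixes a chain with weights $0, s, 2s, \ldots, ts$ for $t := d/s$ and disposes of divisibility by a footnote, whereas you take a one-coordinate-at-a-time chain $\emptyset = S_0 \subset \cdots \subset S_d$ chosen greedily and use submodularity of the coverage function $g$ to force the marginal gains $\delta_i$ to be non-increasing; that monotonicity is exactly what turns the window bounds $\delta_{k-s+1}+\cdots+\delta_k \ge c$ into the clean inequality $g(S_d) \ge dc/s$ with no integrality slack (a naive $\lfloor d/s \rfloor$-block count can lose nearly a factor of two when $s$ is close to $d$). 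Your other variation --- choosing $y := \cM[x]$ when $e_0 \ge e_1$ and $y := \cM[x']$ otherwise --- packages in one stroke the two symmetric cases that the paper dispatches with a terse ``similarly,'' which there amounts to rerunning the triangle-inequality derivation of the key fact with the roles of $x$ and $x'$ swapped. One small remark: the parenthetical observation that ``$(x,x')$ really are $(e'_0,e'_1)$-far'' is never actually invoked; the contradiction in your key claim comes directly from $|S\setminus S'| = s$ exceeding $|S\setminus Z| + |Z\setminus S'| \le e'_1+e'_0 = s-1$, so that aside can be dropped.
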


\begin{proof}
  We use similar arguments as those used in \cites{ref:BKS06,ref:GG08}
  in the context of black-box hardness amplification in $\mathsf{NP}$:
  Define a partial ordering $\prec$ between binary vectors using
  bit-wise comparisons (with $0 < 1$).  Let $t := d/(e'_0+e'_1+1)$ be
  an integer\footnote{For the sake of simplicity in this presentation
    we ignore the fact that certain fractions might in general give
    non-integer values. However, it should be clear that this will
    cause no loss of generality.}, and consider any monotonically
  increasing sequence of vectors $x_0 \prec \cdots \prec x_t$ in
  $\zo^n$ where $x_i$ has weight $i(e'_0+e'_1+1)$. Thus, $x_0$ and
  $x_t$ will have weights zero and $d$, respectively. Note that we
  must also have $\cM[x_0] \prec \cdots \prec \cM[x_t]$ due to
  monotonicity of the ``or'' function.

  A fact that is directly deduced from Definition~\ref{def:matrix} is
  that, for every $x, x' \in \zo^n$, if $(\cM[x], \cM[x'])$ are $(e_0,
  e_1)$-close, then $x$ and $x'$ must be $(e'_0+e'_1,
  e'_0+e'_1)$-close. This can be seen by setting $y := \cM[x']$ in the
  definition, for which there exists a valid decoding $z \in
  \zo^n$. As $(\cM[x], y)$ are $(e_0, e_1)$-close, the definition
  implies that $(x, z)$ must be $(e'_0, e'_1)$-close.  Moreover,
  $(\cM[x'], y)$ are $(0, 0)$-close and thus, $(e_0, e_1)$-close,
  which implies that $(z, x')$ must be $(e'_1, e'_0)$-close. Thus by
  the triangle inequality, $(x, x')$ must be $(e'_0+e'_1,
  e'_0+e'_1)$-close.

  Now, observe that for all $i$, $(x_i, x_{i+1})$ are $(e'_0+e'_1,
  e'_0+e'_1)$-far, and hence, their encodings must be $(e_0,
  e_1)$-far, by the fact we just mentioned.  In particular this
  implies that $\cM[x_t]$ must have weight at least $t(e_0+1)$, which
  must be trivially upper bounded by $m$. Hence it follows that
  $(e_0+1)/(e'_0+e'_1+1) \leq m/d$. Similarly we can also show that
  $(e_1+1)/(e'_0+e'_1+1) \leq m/d$.
\end{proof}

The above lemma gives a trade-off between the tolerable error in the measurements versus the
reconstruction error. In particular, for unique decoding to be
possible (i.e., $e'_0 = e'_1 = 0$) one can only guarantee resiliency against up to $O(1/d)$
fraction of errors in the measurement. On the other hand, tolerance
against a constant fraction of errors (i.e., $e_0 = \Omega(m)$ or $e_1 = \Omega(m)$) 
would make an ambiguity of order
$\Omega(d)$ in the decoding inevitable, irrespective of the number of
measurements. 

As discussed in the introduction, for most applications it is desirable
to have a one-sided error in reconstruction, in which case
the support of the reconstruction outcome $\hat{x}$ is required to contain the support of the original
vector $x$ being measured, and be possibly larger by up to $O(d)$
positions. Moreover, such schemes can be used in 
\emph{trivial two-stage} schemes as defined in \cite{ref:Kni95}.

The trade-off given by the following lemma only focuses on false
negatives and is thus useful for trivial two-stage schemes:

\begin{lem} \label{lem:falseNeg} Suppose that an $m \times n$
  measurement matrix $M$ is $(e_0, e_1, e'_0, e'_1)$-resilient for
  $d$-sparse vectors.  Then for every $\eps > 0$, either \[ e_1 <
  \frac{(e'_1+1)m}{\eps d}\] or \[e'_0 \geq
  \frac{(1-\eps)(n-d+1)}{(e'_1+1)^2}.\]
\end{lem}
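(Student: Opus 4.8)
The plan is to mimic the structure of Lemma~\ref{lem:distance} but now exploiting the one-sidedness: the culprit is false negatives in the measurements ($e_1$), so I would build a long monotone chain of sparse vectors whose \emph{measurement weights} are forced to grow, and play this against the assumption that $e'_0$ is small. Concretely, assume both alternatives fail, i.e.\ $e_1 \geq (e'_1+1)m/(\eps d)$ and $e'_0 < (1-\eps)(n-d+1)/(e'_1+1)^2$, and derive a contradiction. As in the previous lemma, I first record the consequence of Definition~\ref{def:matrix}: if $(\cM[x],\cM[x'])$ are $(e_0,e_1)$-close then $(x,x')$ are $(e'_0+e'_1, e'_0+e'_1)$-close (via a valid decoding $z$ of $y:=\cM[x']$ and the triangle inequality). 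The refinement I want is an asymmetric version: if $x \prec x'$ (bitwise) and $\cM[x]$ agrees with $\cM[x']$ on all but at most $e_1$ coordinates — which happens automatically whenever $\wgt(\cM[x']) - \wgt(\cM[x]) \le e_1$, since by monotonicity $\cM[x] \prec \cM[x']$ and their difference is only $1\to 0$ flips — then $(x,x')$ being $(e'_0,e'_1)$-far is impossible, so in particular $\wgt(x') - \wgt(x) \le e'_0 + e'_1$.

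Next I would set up the chain. Take $t := \lceil d/(e'_0+e'_1+1)\rceil$ (or just work with $d$ itself and a step size $s := e'_0+e'_1+1$) and a monotone sequence $0 = x_0 \prec x_1 \prec \cdots \prec x_t$ with $\wgt(x_i) = i s$, so $\wgt(x_t)=d$. Then $\cM[x_0]\prec\cdots\prec\cM[x_t]$. The key counting step: for how many indices $i$ can the weight jump $\wgt(\cM[x_{i+1}]) - \wgt(\cM[x_i])$ exceed $e_1$? Since the total telescoped sum of these jumps is $\wgt(\cM[x_t]) \le m$, at most $m/(e_1+1)$ of the consecutive pairs can have a jump strictly larger than $e_1$. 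For \emph{every other} consecutive pair, the italicized fact above forces $\wgt(x_{i+1}) - \wgt(x_i) = s \le e'_0 + e'_1$, which is true by construction — so that gives no contradiction directly. The contradiction must instead come from a chain that is \emph{too long} relative to $n$: I should not cap the chain at weight $d$ but rather push weights all the way up toward $n$ (staying $d$-sparse is not required for the intermediate argument if I instead only need the final inequality on $e'_0$ — but Definition~\ref{def:matrix} only speaks about $d$-sparse vectors, so I do need to stay within weight $d$).

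So the real mechanism is different: among the $t$ consecutive pairs, the pairs with small jump ($\le e_1$) must satisfy $\wgt(x_{i+1})-\wgt(x_i) \le e'_0+e'_1$, hence have weight gap at most $e'_0+e'_1$, but here the chosen gap $s$ is \emph{exactly} $e'_0+e'_1+1 > e'_0+e'_1$, a contradiction — so \emph{every} consecutive pair must have a large jump, $> e_1$. Summing, $m \ge \wgt(\cM[x_t]) > t e_1 \ge (d/(e'_0+e'_1+1)) e_1$, giving $e_1 < m(e'_0+e'_1+1)/d$. That is weaker than the claimed first alternative and still symmetric, so I'm evidently missing how $n$ and the $1/(e'_1+1)^2$ factor enter. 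The resolution: to get the $n$-dependence one counts not a single chain but uses a \emph{volume/packing} argument — one shows that if $e_1$ is large (first alternative fails) then many $d$-sparse vectors share a common valid decoding $z$ (the decoding of a fixed $y$), because each such $x$ with $\wgt(x)=d$ and $\cM[x]$ within $e_1$ of $y$ must be $(e'_0,e'_1)$-close to $z$; counting how many weight-$d$ vectors can be $(e'_0,e'_1)$-close to a single $z$ against how many are forced to map near $y$ yields the bound $e'_0 \gtrsim (n-d+1)/(e'_1+1)^2$. The main obstacle, then, is precisely to make this packing step quantitative: I would fix the all-ones-on-a-$(d-e'_1\text{-ish})$-set structure, argue via the monotone-chain fact that along any maximal chain ending at weight $d$ the measurement weight can exceed $e_1$ at most $m/(e_1+1) < \eps d/(e'_1+1)$ times, so the chain contains a run of $\ge (e'_1+1)$ consecutive ``small-jump'' steps, forcing two chain vectors at weight-distance $\ge (e'_1+1)\cdot 1$... — honestly the cleanest route is to count supersets: consider all $d$-subsets $X$ obtained from a fixed $(d-\Delta)$-subset by adding $\Delta$ new elements, show they must all decode to vectors within $(e'_0,e'_1)$ of each other, bound the diameter of such a decoding cluster by $2(e'_0+e'_1)$, and compare $\binom{n-d+\Delta}{\Delta}$ against the number of weight-$\le d+e'_0$ vectors in a Hamming ball of radius $e'_0+e'_1$ around a fixed vector; choosing $\Delta \approx e'_1+1$ and using $e_1 \ge (e'_1+1)m/(\eps d)$ to guarantee the cluster hypothesis is where the $\eps$ and the square appear. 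Working out that comparison of binomials is the routine-but-delicate part I would defer to the written proof.
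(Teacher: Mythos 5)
The first part of your writeup (the monotone chain) you correctly recognize as a dead end: it recovers only the symmetric Lemma~\ref{lem:distance}--type bound and cannot produce the $n$-dependence.  Your pivot to ``fix $x'$, vary the extension $x$, and force all extensions to decode near a single valid decoding $z$ of $\cM[x']$'' is the right high-level move, and it is in fact what the paper does.  The paper picks $x$ of weight $d$ uniformly at random, deletes a random $(e'_1+1)$-subset of its support to get $x'$, observes $\Pr[b'=0\mid b=1]\le (e'_1+1)/d$ for each individual disjunction, so $\Ex[\dist(\cM[x],\cM[x'])]\le (e'_1+1)m/d$, and then fixes an $x'$ achieving this average and applies Markov: a $(1-\eps)$-fraction $\cX$ of the $(e'_1+1)$-extensions $x$ of $x'$ have $\dist(\cM[x],\cM[x'])\le (e'_1+1)m/(\eps d)$.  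If the first alternative fails, all these $x$ satisfy that $(\cM[x],\cM[x'])$ are $(e_0,e_1)$-close, hence $(x,z)$ are $(e'_0,e'_1)$-close.  Your proposal gestures at exactly this Markov step (``using $e_1\ge(e'_1+1)m/(\eps d)$ to guarantee the cluster hypothesis'') but never carries it out.

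The genuine gap, however, is in the final quantitative step.  Your plan---compare $\binom{n-d+\Delta}{\Delta}$ against the volume of a Hamming ball of radius $O(e'_0+e'_1)$ around $z$---does not work, because the extensions $x$ of a fixed $x'$ are already pairwise within Hamming distance $2(e'_1+1)$ of one another: they all agree on $\supp(x')$ and differ only on their $(e'_1+1)$ added coordinates.  So ``they must all lie in one small Hamming ball'' is vacuously true and yields no constraint on $e'_0$.  The paper's mechanism is different and does not go through a volume count at all.  From $|X\setminus Z|\le e'_1$ and $|X\setminus X'|=e'_1+1$ it deduces, by inclusion--exclusion, that $(X\setminus X')\cap Z\neq\emptyset$: for \emph{each} $x\in\cX$, the decoding $z$ must contain at least one of the $e'_1+1$ newly added coordinates.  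Collecting these constraints as hyperedges on vertex set $[n]\setminus\supp(x')$ gives an $(e'_1+1)$-hypergraph of density $\ge 1-\eps$; since $\supp(z)\setminus\supp(x')$ is a vertex cover of this hypergraph, and a dense $c$-hypergraph on $N$ vertices admits a matching of size $\ge \eps' N/c^2$ (Lemma~\ref{lem:denseGraph}), one gets $|\supp(z)\setminus\supp(x')|\ge (1-\eps)(n-d+1)/(e'_1+1)^2$, hence $e'_0\ge (1-\eps)(n-d+1)/(e'_1+1)^2$.  This vertex-cover/matching bound is where the $(e'_1+1)^2$ comes from, and it is the piece your proposal is missing; without it, the ``routine-but-delicate binomial comparison'' you defer does not close.
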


\begin{proof}
  Let $x \in \zo^n$ be chosen uniformly at random among vectors of
  weight $d$.  Randomly flip $e'_1+1$ of the bits on the support of
  $x$ to $0$, and denote the resulting vector by $x'$. Using the
  partial ordering $\prec$ in the proof of the last lemma, it is
  obvious that $x' \prec x$, and hence, $\cM[x'] \prec \cM[x]$.  Let
  $b$ denote any disjunction of a number of coordinates in $x$ and
  $b'$ the same disjunction in $x'$. We must have
  \[ \Pr[b' = 0 | b = 1] \leq \frac{e'_1+1}{d}, \] as for $b$ to be
  $1$ at least one of the variables on the support of $x$ must be
  present in the disjunction and one particular such variable must
  necessarily be flipped to bring the value of $b'$ down to zero.
  Using this, the expected Hamming distance between $\cM[x]$ and
  $\cM[x']$ can be bounded as follows:
  \begin{eqnarray*}
    \Ex[ \dist(\cM[x], \cM[x']) ] = \sum_{i \in [m]} \mathds{1}( \cM[x]_i = 1 \land \cM[x']_i = 0 ) \leq \frac{e'_1+1}{d} \cdot m,
  \end{eqnarray*}
  where the expectation is over the randomness of $x$ and the bit
  flips, $\dist(\cdot, \cdot)$ denotes the Hamming distance between
  two vectors, and $\mathds{1}(\cdot)$ denotes an indicator predicate.

  Fix a particular choice of $x'$ that keeps the expectation at most
  $(e'_1+1)m/d$. Now the randomness is over the possibilities of $x$,
  that is, flipping up to $e'_1+1$ zero coordinates of $x'$ randomly.
  Denote by $\cX$ the set of possibilities of $x$ for which $\cM[x]$
  and $\cM[x']$ are $\frac{(e'_1+1)m}{\eps d}$-close, and by $\cS$ the
  set of all vectors that are monotonically larger than $x'$ and are
  $(e'_1+1)$-close to it.  Obviously, $\cX \subseteq \cS$, and, by
  Markov's inequality, we know that $|\cX| \geq (1-\eps) |\cS|$.

  Let $z$ be any valid decoding of $\cM[x']$, Thus, $(x', z)$ must be
  $(e'_0, e'_1)$-close.  Now assume that $e_1 \geq
  \frac{(e'_1+1)m}{\eps d}$ and consider any $x \in \cX$.  Hence,
  $(\cM[x], \cM[x'])$ are $(e_0, e_1)$-close and $(x, z)$ must be
  $(e'_0, e'_1)$-close by Definition~\ref{def:matrix}.  Regard $x, x',
  z$ as the characteristic vectors of sets $X, X', Z \subseteq [n]$,
  respectively, where $X' \subseteq X$. We know that $|X \sm Z| \leq
  e'_1$ and $|X \sm X'| = e'_1+1$.  Therefore,
  \begin{equation} \label{eqn:sets} |(X \sm X') \cap Z| = |X \sm X'| -
    |X\sm Z| + |X' \sm Z| > 0,
  \end{equation}
  and $z$ must take at least one nonzero coordinate from $\supp(x) \sm
  \supp(x')$.

  Now we construct an $(e'_1+1)$-hypergraph\footnote{See
    Appendix~\ref{app:proofs} for definitions.} $H$ as follows: The
  vertex set is $[n] \sm \supp(x')$, and for every $x \in \cX$, we put
  a hyperedge containing $\supp(x) \sm \supp(x')$. The density of this
  hypergraph is at least $1-\eps$, by the fact that $|\cX| \geq
  (1-\eps) \cS$.  Now Lemma~\ref{lem:denseGraph} implies that $H$ has
  a matching of size at least
  \[
  t := \frac{(1-\eps)(n-d+1)}{(e'_1+1)^2}.
  \]
  As by \eqref{eqn:sets}, $\supp(z)$ must contain at least one element
  from the vertices in each hyperedge of this matching, we conclude
  that $|\supp(z) \sm \supp(x')| \geq t$, and that $e'_0 \geq t$.
\end{proof}

The lemma above shows that if one is willing to keep the number $e'_1$
of false negatives in the reconstruction at the zero level (or bounded
by a constant), only an up to $O(1/d)$ fraction of false negatives in
the measurements can be tolerated (regardless of the number of
measurements), unless the number $e'_0$ of false positives in the
reconstruction grows to an enormous amount (namely, $\Omega(n)$ when
$n-d = \Omega(n)$) which is certainly undesirable.

Recall that, as mentioned in the introduction, exact reconstruction of $d$-sparse vectors of length $n$,
even in a noise-free setting, requires at least $\Omega(d^2 \log_d n)$
non-adaptive measurements.  However, it turns out that there is no
such restriction when an approximate reconstruction is sought for,
except for the following bound which can be shown using simple
counting and holds for adaptive noiseless schemes as well:

\begin{lem} \label{lem:lowerbound} Let $\cM$ be an $m \times n$
  measurement matrix that is $(0, 0, e'_0, e'_1)$-resilient for
  $d$-sparse vectors. Then \[m \geq d \log (n/d) - d - e'_0 - O(e'_1
  \log ((n-d-e'_0)/e'_1)),\] where the last term is defined to be zero
  for $e'_1 = 0$.
\end{lem}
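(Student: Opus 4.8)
The plan is a volume (counting) argument that compares the $2^m$ possible measurement outcomes against the number $\binom{n}{d}$ of weight-$d$ vectors the scheme must distinguish. For each $y \in \zo^m$, fix once and for all a valid decoding $z(y) \in \zo^n$ as guaranteed by Definition~\ref{def:matrix} with $e_0 = e_1 = 0$. Read contrapositively, that definition says: if $\cM[x] = y$ then $(x, z(y))$ are $(e'_0, e'_1)$-close, i.e.\ $\supp(x)$ is obtained from $\supp(z(y))$ by deleting at most $e'_0$ elements and inserting at most $e'_1$ elements. Hence the map
\[
 x \;\longmapsto\; \bigl(\,\cM[x],\ \supp(z(\cM[x]))\setminus\supp(x),\ \supp(x)\setminus\supp(z(\cM[x]))\,\bigr)
\]
is injective on the set of vectors of Hamming weight exactly $d$ (from the triple one recovers $z=z(\cM[x])$ and then $\supp(x)$), so it suffices to bound the number of triples in its image. (The same counting applies verbatim to adaptive noiseless schemes, which still have at most $2^m$ distinct transcripts.)

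The key point is that only decodings of nearly prescribed weight are relevant: if some weight-$d$ vector maps to $y$, then $z(y)$ differs from it by at most $e'_0$ insertions and at most $e'_1$ deletions, so $d - e'_1 \le |\supp(z(y))| \le d + e'_0$; outcomes not hit by any weight-$d$ vector may be ignored. Therefore, in the triple above the ``deleted'' component is a subset of size $\le e'_0$ of a set of size $\le d+e'_0$, giving at most $\sum_{i=0}^{e'_0}\binom{d+e'_0}{i} \le 2^{\,d+e'_0}$ choices, and the ``inserted'' component is a subset of size $\le e'_1$ of $[n]\setminus\supp(z(y))$, a set of size $\le n-d+e'_1$, giving at most $\sum_{j=0}^{e'_1}\binom{n-d+e'_1}{j}$ choices (only the empty set when $e'_1 = 0$). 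Multiplying by the $2^m$ choices of outcome,
\[
 \binom{n}{d} \;\le\; 2^m\cdot 2^{\,d+e'_0}\cdot \sum_{j=0}^{e'_1}\binom{n-d+e'_1}{j}.
\]

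Taking base-$2$ logarithms, bounding $\binom{n}{d}\ge (n/d)^d$ from below and $\sum_{j=0}^{k}\binom{N}{j}\le (eN/k)^k$ (so its logarithm is $O(k\log(N/k))$) from above, and rearranging gives
\[
 m \;\ge\; d\log(n/d) - d - e'_0 - O\!\bigl(e'_1\log((n-d)/e'_1)\bigr),
\]
the last term being $0$ for $e'_1 = 0$. This is the claimed inequality: replacing $n-d$ by $n-d-e'_0$ inside the logarithm only makes the subtracted term smaller, and in the regime where the bound is not already vacuous (so $e'_0,e'_1$ small relative to $n-d$) the two expressions agree up to the constant absorbed by $O(\cdot)$.

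I expect the only genuine subtlety to be the bookkeeping around $\supp(z(y))$: one must check that, for outcomes actually produced by weight-$d$ vectors, $|\supp(z(y))|$ lies in the narrow band $[d-e'_1,\,d+e'_0]$, as this is precisely what simultaneously localizes the deleted and the inserted coordinates and makes the count tight. Everything after that is routine binomial-coefficient estimation, and the case $e'_1 = 0$ is immediate since then no coordinates are inserted.
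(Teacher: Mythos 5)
Your proof is correct and follows essentially the same volume-counting strategy as the paper: both arguments pair each $d$-sparse vector with its (fixed, deterministic) decoding, observe that the decoding's support has size in $[d-e'_1,\,d+e'_0]$, and then bound the number of sparse vectors that can share a given outcome by counting the $\le e'_0$ ``extra'' and $\le e'_1$ ``missing'' support positions before taking logarithms. The only cosmetic difference is packaging — you phrase it as an explicit injection $x \mapsto (\cM[x], A, B)$ while the paper counts the size of the image set $X$ of the decoder and divides — and your replacement of $n-d-e'_0$ by $n-d$ inside the logarithm is, as you note, absorbed by the $O(\cdot)$.
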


\begin{proof}
  The proof is a simple counting argument.  For integers $a > b > 0$,
  we use the notation $V(a, b)$ for the volume of a Hamming ball of
  radius $b$ in $\zo^a$. It is given by
  \[
  V(a, b) = \sum_{i = 0}^b \binom{a}{i} \leq 2^{a h(b/a)},
  \]
  where $h(\cdot)$ is the binary entropy function defined as
  \[
  h(x) := -x\log_2(x) -(1-x)\log_2(1-x),
  \]
  and thus
  \[
  \log V(a, b) \leq b \log \frac{a}{b} + (a-b) \log \frac{a}{a-b} =
  \mathrm{\Theta}(b \log (a/b)).
  \]
  Also, denote by $V'(a, b, e_0, e_1)$ the number of vectors in
  $\zo^a$ that are $(e_0, e_1)$-close to a fixed $b$-sparse
  vector. Obviously, $V'(a, b, e_0, e_1) \leq V(b, e_0) V(a-b, e_1)$.
  Now consider any (without loss of generality, deterministic) reconstruction algorithm $D$
  and let $X$ denote the set of all vectors in $\zo^n$ that it returns
  for some noiseless encoding; that is,
  \[
  X := \{ x \in \zo^n \mid \exists y \in \cB, x = D(A[y]) \},
  \]
  where $\cB$ is the set of $d$-sparse vectors in $\zo^n$.  Notice
  that all vectors in $X$ must be $(d+e'_0)$-sparse, as they have to
  be close to the corresponding ``correct'' decoding.  For each vector
  $x \in X$ and $y \in \cB$, we say that $x$ is \emph{matching} to $y$
  if $(y, x)$ are $(e'_0, e'_1)$-close. A vector $x \in X$ can be
  matching to at most $v := V'(n, d+e'_0, e'_0, e'_1)$ vectors in
  $\cB$, and we upper bound $\log v$ as follows:
  \[
  \log v \leq \log V(n-d-e'_0, e'_1) + \log V(d+e'_0, e'_0) = O(e'_1
  \log ((n-d-e'_0)/e'_1)) + d + e'_0,
  \]
  where the term inside $O(\cdot)$ is interpreted as zero when $e'_1 =
  0$.  Moreover, every $y \in \cB$ must have at least one matching
  vector in $X$, namely, $D(\cM[y])$. This means that $|X| \geq
  |\cB|/v$, and that
  \[
  \log |X| \geq \log |\cB| - \log v \geq d \log (n/d) - d - e'_0 -
  O(e'_1 \log ((n-d-e'_0)/e'_1)).
  \]
  Finally, we observe that the number of measurements has to be at
  least $|X|$ to enable $D$ to output all the vectors in $X$.
\end{proof}

According to the lemma, even in the noiseless scenario, any
reconstruction method that returns an approximation of the sparse
vector up to $e'_0=O(d)$ false positives and without false negatives
will require $\Omega(d \log (n/d))$ measurements. As we will show in
the next section, an upper bound of $O(d \log n)$ is in fact
attainable even in a highly noisy setting using only non-adaptive
measurements.  This in particular implies an asymptotically optimal
trivial two-stage group testing scheme.

\section{A Noise-Resilient Construction} \label{sec:nrConstr}

In this section we introduce our general construction and design
measurement matrices for testing $d$-sparse vectors in $\zo^n$.  The
matrices can be seen as adjacency matrices of certain unbalanced
bipartite graphs constructed from good randomness condensers. 
The main technique that we use to show the desired
properties is the \emph{list-decoding view} of randomness condensers,
extractors, and expanders, developed over the recent years starting
from the work of Ta-Shma and Zuckerman on \emph{extractor codes}
\cite{ref:TZ04} and followed by Guruswami, Umans, Vadhan \cite{ref:GUV09}
and Vadhan \cite{ref:Vad10}.

\subsection{Construction from Condensers} \label{sec:construction}

We start by introducing the terms and tools that we will use in our
construction and its analysis.

\begin{defn} (mixtures, agreement, and agreement list) \label{def:mix}
  \index{mixture} \index{agreement list} Let $\Sigma$ be a finite
  set. A \emph{mixture} over $\Sigma^n$ is an $n$-tuple $S := (S_1,
  \ldots, S_n)$ such that every $S_i$, $i \in [n]$, is a nonempty
  subset of $\Sigma$.

  The \emph{agreement} of $w := (w_1, \ldots w_n) \in \Sigma^n$ with
  $S$, denoted by $\agr(w, S)$, is the quantity \[ \frac{1}{n} |\{i
  \in [n]\colon w_i \in S_i\}|.  \] Moreover, we define the quantities
  \[ \wgt(S) := \sum_{i \in [n]} |S_i| \] and \[\rho(S) := \wgt(S)/(n
  |\Sigma|),\] where the latter is the expected agreement of a random
  vector with $S$. 
  
  For example, consider a mixture $S := (S_1,\ldots, S_8)$ over $[4]^8$ 
  where $S_1 :=\emptyset, S_2 :=\{1,3\}, S_3 :=\{1,2\}, S_4 :=\{1,4\}, 
  S_5 :=\{1\}, S_6 :=\{3\}, S_7 :=\{4\}, S_8 :=\{1,2,3,4\}$. For this
  example, we have
  \[
   \agr((1,3,2,3,4,3,4,4),S) = 5/8,
  \]
  and $\rho(S) = 13/32$.
  
  For a code $\cC \subseteq \Sigma^n$ and $\alpha \in (0, 1]$, the
  \emph{$\alpha$-agreement list} of $\cC$ with respect to $S$, denoted
  by $\List_\cC(S, \alpha)$, is defined as the set\footnote{When
    $\alpha=1$, we consider codewords with full agreement with the
    mixture.}  \[ \List_\cC(S, \alpha) := \{ c \in \cC\colon \agr(c,
  S) > \alpha \}.  \]
\end{defn}

\begin{defn} (induced code) \index{induced code} Let $f\colon \Gamma
  \times \Omega \to \Sigma$ be a function mapping a finite set $\Gamma
  \times \Omega$ to a finite set $\Sigma$. For $x \in \Gamma$, we use
  the shorthand $f(x)$ to denote the vector $y := (y_i)_{i \in
    \Omega}$, $y_i := f(x, i)$, whose coordinates are indexed by the
  elements of $\Omega$ in a fixed order.  The \emph{code induced by
    $f$}, denoted by $\cC(f)$ is the set \[ \{ f(x)\colon x \in \Gamma
  \}. \] The induced code has a natural encoding function given by $x
  \mapsto f(x)$.
\end{defn}

\begin{defn} (codeword graph) \index{codeword graph} Let $\cC
  \subseteq \Sigma^n$, $|\Sigma| = q$, be a $q$-ary code. The
  \emph{codeword graph} of $\cC$ is a bipartite graph with left vertex
  set $\cC$ and right vertex set $n \times \Sigma$, such that for
  every $x = (x_1, \ldots, x_n) \in \cC$, there is an edge between $x$
  on the left and $(1, x_1), \ldots, (n, x_n)$ on the right. The
  \emph{adjacency matrix} of the codeword graph is an $n |\Sigma|
  \times |\cC|$ binary matrix whose $(i, j)$th entry is~$1$ if and only if there
  is an edge between the $i$th right vertex and the $j$th left vertex.
\end{defn}

\begin{figure}
\newcommand{\ro}[1]{\mbox{\begin{rotate}{90}$#1$\end{rotate}}\text{\hspace{0.5em}}} 
\begin{centering}
\begin{tabular}{rl}
\vspace{3em} &
\multirow{4}{*}{\vspace{-4cm}\hspace{-2mm}\mbox{\includegraphics[width=5cm]{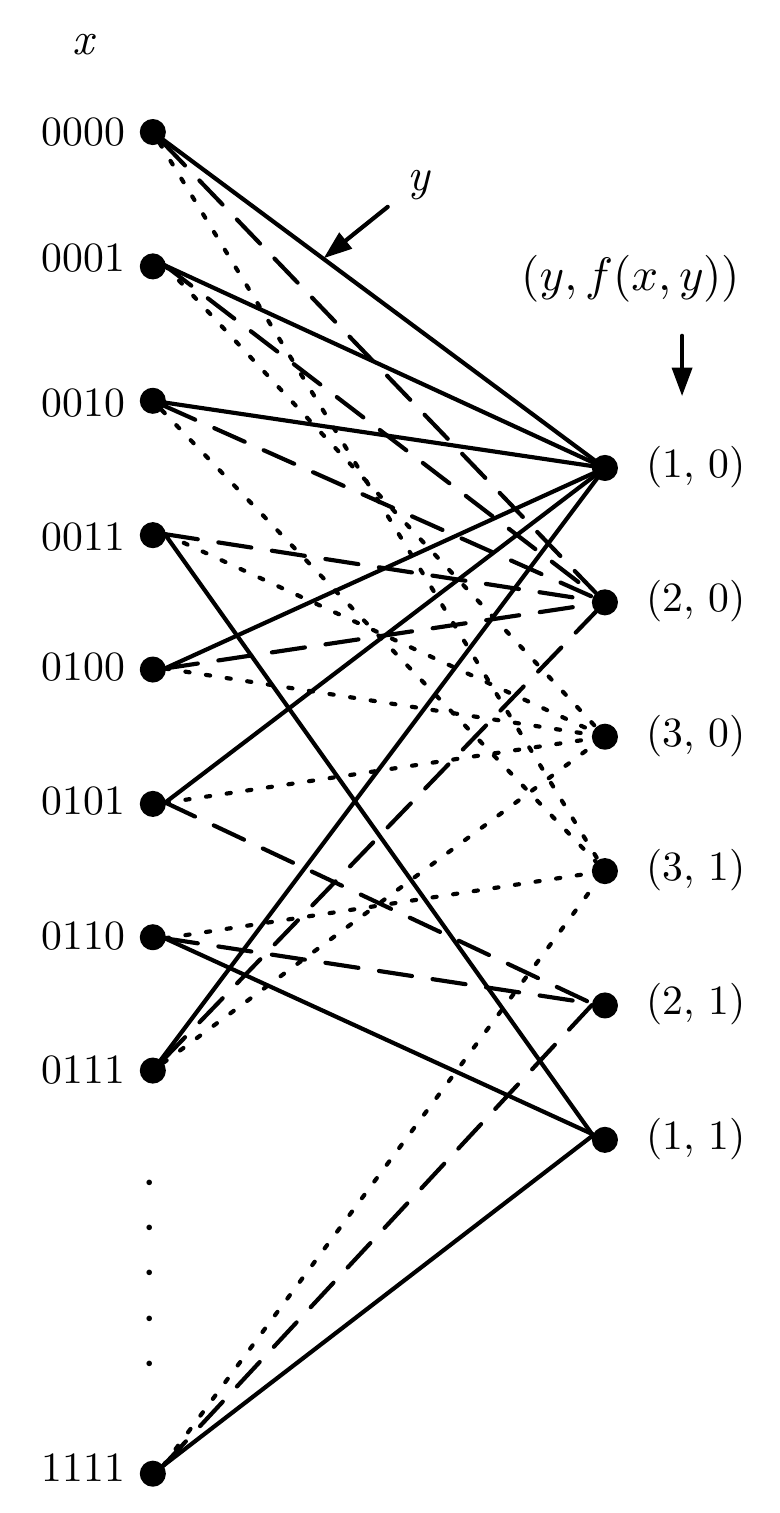}}}
\\
\hspace{-5mm} \mbox{$
\begin{array}{c}
\begin{array}{ll}
\begin{matrix} \hspace{1.4em} 
\ro{x=0000}&\ro{x=0001}&\ro{x=0010}&\ro{x=0011}&\ro{x=0100}&\ro{x=0101}&\ro{x=0110}&\ro{x=0111}&\text{\hspace{1em}}&\ro{x=1111}
\end{matrix}\\
\begin{pmatrix}
0&0&0&1&0&0&1&0&\ldots&1 \\
0&0&0&0&0&1&1&0&\ldots&1 \\
1&0&1&0&0&0&1&0&\ldots&1 \\
\end{pmatrix} 
\begin{matrix}
y=1\\ y=2\\ y=3
\end{matrix}
\end{array} \\
f(x,y)
\end{array}
$} &
\\
\vspace{15mm}
\\ \hspace{-5mm}
\mbox{$
\begin{array}{rr}
\begin{matrix}  
(y,f(x,y)) \hspace{2em}
\ro{x=0000}&\ro{x=0001}&\ro{x=0010}&\ro{x=0011}&\ro{x=0100}&\ro{x=0101}&\ro{x=0110}&\ro{x=0111}&\text{\hspace{1em}}&\ro{x=1111}\hspace{0.5em}
\end{matrix}\\
\begin{array}{r}
(1,0)\\(1,1)\\(2,0)\\(2,1)\\(3,0)\\(3,1)\\
\end{array}
\begin{pmatrix}
1&1&1&0&1&1&0&1&\ldots&0 \\
0&0&0&1&0&0&1&0&\ldots&1 \\ \hline 
1&1&1&1&1&0&0&1&\ldots&0 \\
0&0&0&0&0&1&1&0&\ldots&1 \\ \hline
0&1&0&1&1&1&0&1&\ldots&0 \\
1&0&1&0&0&0&1&0&\ldots&1 \\
\end{pmatrix}
\end{array}
$} &
\end{tabular}
\end{centering}
\caption[A function with its truth table, codeword graph of the induced
code, and the adjacency matrix of the graph]{A function $f\colon \zo^4 \times [3] \to \zo$ with its truth table (top left), codeword graph of the induced
code (right), and the adjacency matrix of the graph (bottom left). Solid, dashed and dotted edges
in the graph respectively correspond to the choices $y=1$, $y=2$, and $y=3$ of the second argument.}
\label{fig:codewordGraph}
\end{figure}

A simple example of a function with its truth table, codeword graph of the induced code along with its
adjacency matrix is given in Figure~\ref{fig:codewordGraph}.

The following theorem is a straightforward generalization of a result in
\cite{ref:TZ04} that is also shown in \cite{ref:GUV09} (we have
included a proof for completeness):

\begin{thm} \label{thm:list} Let $f\colon \zo^\tn \times \zo^\tee \to
  \zo^\tl$ be a strong $k \to_\eps k'$ condenser, and $\cC\subseteq
  \Sigma^{2^\tee}$ be its induced code, where $\Sigma :=
  \zo^\tl$. Then for any mixture $S$ over $\Sigma^{2^\tee}$ we have \[
  |\List_\cC(S, \rho(S) 2^{\tl-k'} + \eps)| < 2^k. \]
\end{thm}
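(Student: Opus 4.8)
The plan is to set up a contradiction with the condenser property by turning a hypothetical large agreement list into a flat source of min-entropy $k$ whose image under $f$, together with the seed, is too concentrated to be $\eps$-close to anything with min-entropy $t+k'$. First I would suppose toward a contradiction that $L := |\List_\cC(S,\alpha)| \geq 2^k$, where $\alpha = \rho(S)2^{\tl-k'}+\eps$, and let $T \subseteq \zo^\tn$ be a set of exactly $2^k$ inputs $x$ whose induced codewords $f(x)$ all lie in the agreement list, i.e.\ satisfy $\agr(f(x),S) > \alpha$. Let $\cX$ be the uniform distribution on $T$; then $H_\infty(\cX) = k$, so the condenser guarantees that $(Y, f(X,Y))$ with $Y\sim\U_\tee$ is $\eps$-close to some distribution $(\U_\tee,\cZ)$ of min-entropy at least $\tee+k'$.

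Next I would quantify how much probability mass $(Y,f(X,Y))$ places on the ``good'' set $G := \{(i,\sigma) \in [2^\tee]\times\Sigma : \sigma \in S_i\}$ — identifying the right-vertex index $i$ with the seed $y$. For a fixed $x\in T$, the number of seeds $y$ with $f(x,y)\in S_y$ is exactly $2^\tee\,\agr(f(x),S) > 2^\tee\alpha$, so averaging over the uniform $x\in T$ and uniform $y$ gives $\Pr[(Y,f(X,Y))\in G] > \alpha$. On the other hand $|G| = \wgt(S) = \rho(S)\,2^\tee|\Sigma| = \rho(S)\,2^{\tee+\tl}$, so any distribution of min-entropy at least $\tee+k'$ assigns $G$ probability at most $|G|\cdot 2^{-(\tee+k')} = \rho(S)\,2^{\tl-k'}$. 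Since $(\U_\tee,\cZ)$ is $\eps$-close to $(Y,f(X,Y))$, statistical distance bounds the discrepancy on the event $G$ by $\eps$, yielding
\[
\alpha < \Pr[(Y,f(X,Y))\in G] \leq \rho(S)\,2^{\tl-k'} + \eps = \alpha,
\]
a contradiction. Hence $|\List_\cC(S,\alpha)| < 2^k$, as claimed.

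The one point that needs a little care — and which I expect to be the main (minor) obstacle — is the bookkeeping identification between the codeword-graph right-vertex set $[2^\tee]\times\Sigma$ and the pairs $(\text{seed},\text{output})$ of the condenser: the agreement $\agr(f(x),S)$ is defined as a fraction over the $2^\tee$ coordinates of the codeword, and I must check that the coordinate indexed by $y$ is precisely $f(x,y)$ (which is exactly the definition of the induced code), so that ``$f(x,y)\in S_y$ for more than an $\alpha$ fraction of $y$'' is literally the statement $\agr(f(x),S)>\alpha$. Everything else is the standard two-line statistical-distance estimate, plus the observation that min-entropy at least $\tee+k'$ means every atom has mass at most $2^{-(\tee+k')}$ so a set of size $s$ gets mass at most $s\,2^{-(\tee+k')}$. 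I would also remark that strictly one should take $\lceil 2^k\rceil$ or note $2^k$ is an integer; this is the kind of rounding the authors have already declared they will suppress.
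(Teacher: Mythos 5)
Your proof is correct and takes essentially the same route as the paper: the paper defines a statistical test $T(x,i)=\mathds{1}(x\in S_i)$ (the indicator of your set $G$), shows any min-entropy-$(\tee+k')$ distribution on seed-output pairs passes it with probability at most $\rho(S)2^{\tl-k'}$ while the flat source on (preimages of) the agreement list passes it with probability exceeding $\rho(S)2^{\tl-k'}+\eps$, and concludes via the condenser definition. The only differences are presentational -- you phrase it as a contradiction and argue directly with statistical distance on the event $G$, whereas the paper phrases it as a direct bound via a distinguishing test -- and your closing remark about identifying the coordinate index with the seed is exactly the identification the paper makes silently when indexing $S$ by $\zo^\tee$.
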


\begin{proof}
  Index the coordinates of $S$ by the elements of $\zo^t$ and denote
  the $i$th coordinate by $S_i$.  Let $Y$ be any random variable with
  min-entropy at least $\tee+k'$ distributed on $\zo^{\tee+k'}$.
  Define an infor\-mation-theoretic test $T\colon \zo^\tl \times
  \zo^\tee \to \zo$ as follows: $T(x, i) = 1$ if and only if $x \in
  S_i$.  Observe that
  \[ \Pr[T(Y) = 1] \leq \wgt(S) 2^{-(\tee+k')} = \rho(S) 2^{\tl-k'},\]
  and that for every vector $w \in (\zo^\ell)^{2^t}$, \[ \Pr_{i \sim
    \U_\tee}[T(w_i, i) = 1] = \agr(w, S).\] Now, let the random
  variable $X=(X_1, \ldots, X_{2^\tee})$ be uniformly distributed on
  the codewords in $\List_\cC(S, \rho(S) 2^{\tl-k'} + \eps)$ and $Z
  \sim \U_\tee$.  Thus, from Definition~\ref{def:mix} we know that \[
  \Pr_{X, Z}[T(X_Z, Z) = 1] > \rho(S) 2^{\tl-k'} + \eps.\] As the
  choice of $Y$ was arbitrary, this implies that $T$ is able to
  distinguish between the distribution of $(Z, X)$ and any
  distribution on $\zo^{\tee+\tl}$ with min-entropy at least
  $\tee+k'$, with bias greater than $\eps$, which by the definition of
  condensers implies that the min-entropy of $X$ must be less than
  $k$, or \[|\List_\cC(S, \rho(S) 2^{\tl-k'} + \eps)| < 2^k.\]
\end{proof}

Now using the above tools, we are ready to describe and analyze our
construction of error-resilient measurement matrices. We first state a
general result without specifying the parameters of the condenser, and
then instantiate the construction with various choices of the
condenser, resulting in matrices with different properties.

\begin{thm} \label{thm:main} Let $f\colon \zo^\tn \times \zo^\tee \to
  \zo^\tl$ be a strong $k \to_\eps k'$ condenser, and $\cC$ be its
  induced code. Suppose that the parameters $\PI, \nu, \gamma > 0$ are
  chosen so that \[(\PI + \gamma)2^{\tl-k'} + \nu/\gamma < 1 -
  \eps, \] and $d := \gamma 2^\tl$. Then the adjacency matrix of the
  codeword graph of $\cC$ (which has $m := 2^{\tee+\tl}$ rows and $n
  := 2^\tn$ columns) is a $(\PI m , (\nu/d) m, 2^k-d, 0)$-resilient
  measurement matrix for $d$-sparse vectors. Moreover, it allows for a
  reconstruction algorithm with running time $O(mn)$. 
\end{thm}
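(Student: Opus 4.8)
The plan is to set up the measurement matrix $A$ as the adjacency matrix of the codeword graph of $\cC$, so that its columns are indexed by $x \in \zo^{\tn}$ (equivalently by codewords $f(x) \in \Sigma^{2^\tee}$ with $\Sigma = \zo^\tl$) and its rows are indexed by pairs $(i,\sigma) \in [2^\tee] \times \Sigma$. The key translation to make first is: for a $d$-sparse vector $x \in \zo^n$ with support $X = \{x^{(1)},\dots,x^{(d)}\}$, the outcome $A[x]$ is exactly the characteristic vector of the set of right-vertices that are neighbors of some codeword in $\{f(x^{(j)})\}$; equivalently, if we define a mixture $S = (S_1,\dots,S_{2^\tee})$ over $\Sigma^{2^\tee}$ by $S_i := \{ f(x^{(j)})_i : j \in [d]\}$, then $A[x]$ is (up to the obvious indexing) the indicator of $S$, and $\wgt(S) \le d|\Sigma| = d \cdot 2^\tl$, so $\rho(S) \le d \cdot 2^\tl/(2^\tee \cdot 2^\tl)\cdot\ldots$ — more precisely $\wgt(S)/(2^\tee|\Sigma|) \le d/2^\tee$. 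The decoding algorithm, given an observed $y \in \zo^m$, will be: read off from $y$ a mixture $S_y$ over $\Sigma^{2^\tee}$ (taking $S_{y,i}$ to be the set of symbols $\sigma$ with the $(i,\sigma)$ coordinate of $y$ equal to $1$), compute $\List_\cC(S_y, \alpha)$ for a suitable threshold $\alpha$ by brute force over all $n = 2^\tn$ codewords (this is the $O(mn)$ step — for each of $n$ candidates, check agreement against $m$ coordinates), and output the union of the supports, i.e. $z := \bigvee_{x' \in \List} x'$. The bound $|\List_\cC(S_y,\alpha)| < 2^k$ from Theorem~\ref{thm:list} (applied with $\alpha = \rho(S_y)2^{\tl-k'}+\eps$) then controls the number of false positives in $z$.

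The heart of the argument is to verify the resiliency: suppose $x$ is $d$-sparse and $(A[x], y)$ are $(\PI m, (\nu/d)m)$-close; I must show $(x, z)$ are $(2^k - d, 0)$-close, i.e. $\supp(x) \subseteq \supp(z)$ and $|\supp(z)| < 2^k$. For the containment: each column $x^{(j)}$ of $x$'s support, viewed as a codeword $f(x^{(j)})$, agrees with the true mixture $S$ (from $A[x]$) in all $2^\tee$ coordinates. Passing from $A[x]$ to $y$ flips at most $(\nu/d)m = \nu 2^\tee$ ones to zeros, so $f(x^{(j)})$ loses agreement with $S_y$ in at most that many coordinates; hence $\agr(f(x^{(j)}), S_y) \ge 1 - \nu 2^\tee/2^\tee = 1 - \nu$. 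Meanwhile the false positives (at most $\PI m = \PI 2^{\tee+\tl}$ zeros flipped to ones) increase $\wgt(S_y)$ over $\wgt(S)$ by at most $\PI 2^{\tee+\tl}$, so $\wgt(S_y) \le d\cdot 2^\tl + \PI 2^{\tee+\tl}$, giving $\rho(S_y) \le d/2^\tee + \PI 2^\tl/2^\tl = \ldots$; combined with $d = \gamma 2^\tl$ this yields $\rho(S_y)2^{\tl-k'} \le (\gamma + \PI)2^{\tl-k'}$ roughly (I need to track the $2^{\tee}$ and $2^\tl$ factors carefully here — this bookkeeping with the mixture weight under both kinds of corruption is the main obstacle). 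The hypothesis $(\PI+\gamma)2^{\tl-k'} + \nu/\gamma < 1 - \eps$ is then exactly engineered so that the threshold $\alpha := \rho(S_y)2^{\tl-k'} + \eps < 1 - \nu$; since each $f(x^{(j)})$ has agreement $\ge 1-\nu > \alpha$, every such codeword lies in $\List_\cC(S_y,\alpha)$, so $\supp(x) \subseteq \supp(z)$.

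For the false-positive bound, apply Theorem~\ref{thm:list} directly: $|\List_\cC(S_y, \rho(S_y)2^{\tl-k'}+\eps)| < 2^k$, so $z$, being the union of supports of at most $2^k - 1$ columns indexed by these codewords, has $|\supp(z)| \le 2^k - 1 < 2^k$; since $\supp(x)$ has size at most $d$ and is contained in $\supp(z)$, the pair $(x,z)$ is $(2^k - d, 0)$-close as required. (One should check the threshold $\alpha$ used in the decoder can be taken uniformly — e.g. the decoder can compute $\rho(S_y)$ from $y$ and use exactly $\alpha = \rho(S_y)2^{\tl-k'} + \eps$, and the analysis above shows this $\alpha$ is simultaneously below $1-\nu$ and covered by the list-size bound.) The only remaining detail is to confirm that a valid decoding $z$ exists for \emph{every} $y \in \zo^m$, not just those arising from corrupted codewords — but the construction of $z$ from $\List_\cC(S_y,\alpha)$ is defined for all $y$, and the resiliency condition is vacuous for those $y$ for which no $d$-sparse $x$ is $(\PI m, (\nu/d)m)$-close to $A[x]$ with $A[x]$ close to $y$; so taking $z$ as above for all $y$ works. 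Finally, the running time: computing $S_y$ is $O(m)$, computing $\agr(f(x'), S_y)$ for a single $x'$ takes $O(m)$ time (evaluating the condenser at $2^\tee$ seeds), and we loop over $n$ choices of $x'$, for a total of $O(mn)$, with the union step absorbed. This matches the claimed bound.
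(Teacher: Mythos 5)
Your overall strategy is exactly the paper's: pass from the noisy observation $y$ to a mixture $S_y$, lower-bound the agreement of the ``true'' codewords with $S_y$, bound $\rho(S_y)$ using the false positives, and invoke Theorem~\ref{thm:list} to cap the list size at $2^k$. However, the bookkeeping you flagged as ``the main obstacle'' is indeed where your write-up goes wrong, and the slip is not innocuous.

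First, you write $\wgt(S) \le d|\Sigma| = d\cdot 2^\tl$; the correct bound is $\wgt(S) \le d\cdot 2^\tee$ (each of the $2^\tee$ coordinates contributes a set of size at most $d$), which after dividing by $2^{\tee+\tl}$ yields $\rho(S) \le d/2^\tl = \gamma$, not $d/2^\tee$. (The paper sidesteps this by observing $\rho(S)$ is just the relative Hamming weight of the observation and each column of the matrix has relative weight exactly $1/L$ with $L = 2^\tl$.) Second, and more importantly: $(\nu/d)m = \nu\,2^{\tee+\tl}/(\gamma 2^\tl) = \nu 2^\tee/\gamma$, not $\nu 2^\tee$. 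So the agreement of a support codeword with $S_y$ is only $\ge 1 - \nu/\gamma$, not $\ge 1-\nu$ as you claim. Your two errors happen to cancel (you underclaim the hypothesis as giving $\alpha < 1 - \nu$ and overclaim agreement $\ge 1-\nu$), which makes the write-up look consistent, but the agreement bound $\ge 1 - \nu$ is simply false when $\gamma < 1$. The correct chain is: the hypothesis gives $(\PI+\gamma)2^{\tl-k'}+\eps < 1 - \nu/\gamma$, and since $\rho(S_y) \le \gamma+\PI$ we get $\rho(S_y)2^{\tl-k'}+\eps < 1 - \nu/\gamma$, which is exactly the agreement that every support codeword is guaranteed to have. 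With these two numerical corrections your argument matches the paper's proof (the paper uses the fixed decoding threshold $1 - \nu/\gamma$ rather than your adaptive one, but that is a cosmetic difference). The remainder of your proof---the false-positive bound via $|\List_\cC|<2^k$, the handling of arbitrary $y$, and the $O(mn)$ running-time accounting---is correct.
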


\begin{proof}
  Define $L := 2^\tl$ and $T := 2^\tee$.  Let $\cM$ be the adjacency
  matrix of the codeword graph of $\cC$.  It immediately follows from
  the construction that the number of rows of $\cM$ (denoted by $m$)
  is equal to $TL$. Moreover, notice that the Hamming weight of each
  column of $\cM$ is exactly $T$.
  
  Let $x \in \zo^n$ and denote by $y \in \zo^m$ its encoding, i.e., $y
  := \cM[x]$, and by $\hat{y} \in \zo^m$ a \emph{received word}, or a
  \emph{noisy} version of $y$.
  
  The encoding of $x$ can be schematically viewed as follows: The
  coefficients of $x$ are assigned to the left vertices of the
  codeword graph and the encoded bit on each right vertex is the
  bitwise ``or'' of the values of its neighbors.
  
  The coordinates of $x$ can be seen in one-to-one correspondence with
  the codewords of $\cC$. Let $X \subseteq \cC$ be the set of
  codewords corresponding to the support of $x$.  The coordinates of
  the noisy encoding $\hat{y}$ are indexed by the elements of $[T]
  \times [L]$ and thus, $\hat{y}$ naturally defines a mixture $S =
  (S_1, \ldots, S_{T})$ over $[L]^T$, where $S_i$ contains $j$ if and only if
  $\hat{y}$ at position $(i, j)$ is $1$.
  
  Observe that $\rho(S)$ is the relative Hamming weight (denoted below
  by $\delta(\cdot)$) of $\hat{y}$; thus, we have
  \[
  \rho(S) = \delta(\hat{y}) \leq \delta(y) + \PI \leq d/L + \PI =
  \gamma + \PI,
  \]
  where the last inequality comes from the fact that the relative
  weight of each column of $\cM$ is exactly $1/L$ and that $x$ is
  $d$-sparse.
  
  Furthermore, from the assumption we know that the number of false
  negatives in the measurement is at most $\nu TL/d = \nu
  T/\gamma$. Therefore, any codeword in $X$ must have agreement at
  least $1- \nu/\gamma$ with $S$.  This is because $S$ is indeed
  constructed from a mixture of the elements in $X$, modulo false
  positives (that do not decrease the agreement) and at most $\nu
  T/\gamma$ false negatives each of which can reduce the agreement by
  at most $1/T$.
  
  Accordingly, we consider a decoder which simply outputs a binary
  vector $\hat{x}$ supported on the coordinates corresponding to those
  codewords of $\cC$ that have agreement larger than $1 - \nu/\gamma$
  with $S$. Clearly, the running time of the decoder is linear in the
  size of the measurement matrix.
  
  By the discussion above, $\hat{x}$ must include the support of $x$.
  Moreover, Theorem~\ref{thm:list} applies for our choice of
  parameters, implying that $\hat{x}$ must have weight less than
  $2^k$.
\end{proof}

\subsection{Instantiations} \label{sec:instan}

Now we instantiate the general result given by Theorem~\ref{thm:main}
with various choices of the underlying condenser and compare the obtained
parameters.  First, we consider two extreme cases, namely, a
non-explicit optimal condenser with zero overhead (i.e., extractor)
and then a non-explicit optimal condenser with zero loss (i.e.,
lossless condenser) and then consider how known explicit constructions
can approach the obtained bounds. A summary of the results is
given in Table~\ref{tab:results}.

\subsubsection{Optimal Extractors}

Radhakrishan and Ta-Shma  \cite{ref:lowerbounds} showed that non-constructively, for
every choice of the parameters $k, \tn, \eps$, there is a strong $(k,
\eps)$-extractor with input length $\tn$, seed length $t = \log
(\tn-k) + 2 \log (1/\eps) + O(1)$ and output length $\tl = k - 2\log
(1/\eps) - O(1)$. 
Moreover, their proof shows that the bound is achieved by a uniformly random function,
and is essentially the best one can hope for \cite{ref:lowerbounds} (up to
additive absolute constants).
Plugging this result in Theorem~\ref{thm:main}, we obtain a
non-explicit measurement matrix from a simple, randomized construction
that achieves the desired trade-off with high probability:

\begin{coro} \label{coro:optext} For every choice of constants $\PI
  \in [0, 1)$ and $\nu \in [0, \nu_0)$, $\nu_0 := (\sqrt{5-4\PI} -
  1)^3/8$, and positive integers $d$ and $n \geq d$, there is an $m
  \times n$ measurement matrix, where $m = O(d \log n)$, that is $(\PI
  m, (\nu/d) m, O(d), 0)$-resilient for $d$-sparse vectors of length
  $n$ and allows for a reconstruction algorithm with running time
  $O(mn)$.
\end{coro}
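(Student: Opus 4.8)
The plan is to derive Corollary~\ref{coro:optext} directly from Theorem~\ref{thm:main} by plugging in the Radhakrishnan--Ta-Shma optimal extractor and choosing the free parameters $\PI$, $\nu$, $\gamma$ appropriately. Since an extractor is a condenser with zero overhead, we have $\ell = k'$, so the crucial term $2^{\tl - k'}$ in the hypothesis of Theorem~\ref{thm:main} equals $1$. The inequality $(\PI + \gamma)2^{\tl-k'} + \nu/\gamma < 1 - \eps$ therefore simplifies to $\PI + \gamma + \nu/\gamma < 1 - \eps$. First I would fix the constant $\PI \in [0,1)$ and treat $\gamma$ as a parameter to be optimized: the quantity $\gamma + \nu/\gamma$ is minimized at $\gamma = \sqrt{\nu}$, giving minimum value $2\sqrt{\nu}$. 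So the condition becomes (roughly) $\PI + 2\sqrt{\nu} < 1$, i.e. $\sqrt{\nu} < (1-\PI)/2$. Working out where the stated threshold $\nu_0 = (\sqrt{5-4\PI}-1)^3/8$ comes from is the one genuinely fiddly computation: I expect it arises because one cannot take $\eps$ to be an arbitrarily small constant for free — the output length is $\tl = k - 2\log(1/\eps) - O(1)$, and $d = \gamma 2^\tl$, so there is a coupled constraint linking $\eps$, $\gamma$, $\tl$, and the target sparsity $d$. One must choose $\eps$ as a (small) constant, then $\gamma$, then solve for $\tl$ from $d = \gamma 2^\tl$, i.e. $\tl = \log(d/\gamma)$, and finally verify $k = \tl + 2\log(1/\eps) + O(1) = \log(d/\gamma) + O(1)$, so that $2^k - d = O(d)$ as required for the $e'_0$ parameter.

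Next I would verify the measurement count. We have $m = 2^{\tee+\tl}$ where the seed length is $\tee = \log(\tn - k) + 2\log(1/\eps) + O(1)$ and $\tn = \log n$. Since $k = \log(d/\gamma) + 2\log(1/\eps) + O(1) = O(\log d)$ (for constant $\gamma, \eps$), we get $\tn - k = \Theta(\log n)$ as long as $d$ is not too close to $n$ — and for $d$ comparable to $n$ the bound $m = O(d\log n)$ is trivial to meet by other means, or one simply notes $\log(\tn - k) \le \log\tn = \log\log n$. Hence $2^\tee = O(\log n \cdot (1/\eps)^2) = O(\log n)$ and $2^\tl = d/\gamma = O(d)$, so $m = 2^\tee \cdot 2^\tl = O(d \log n)$, as claimed. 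The reconstruction algorithm and its $O(mn)$ running time, along with the zero-false-negative and $O(d)$-false-positive guarantees, are inherited verbatim from Theorem~\ref{thm:main}.

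The main obstacle, then, is not conceptual but bookkeeping: one must chase the $O(1)$ additive slack in the extractor's seed and output lengths through the relations $d = \gamma 2^\tl$ and $k = \tl + 2\log(1/\eps) + O(1)$, choosing $\eps$ small enough (depending on $\PI$ and $\nu$) that $\PI + \gamma + \nu/\gamma + \eps < 1$ holds with $\gamma = \sqrt\nu$ (or a nearby value), while simultaneously keeping $\eps$ a fixed constant so that $2^k/d = 2^{O(1)}/\eps^2 = O(1)$ and $2^\tee = O(\log n)$. The precise form of $\nu_0$ is what makes this work: one should present the choice $\gamma := (\sqrt{5-4\PI}-1)/2$ (a root of $\gamma^2 + \PI + \gamma = 1$ up to the $\eps$ slack, equivalently $\gamma^2 + \gamma = 1 - \PI$), check that with this $\gamma$ the constraint $\gamma + \nu/\gamma < 1 - \PI - \eps$ rearranges to $\nu < \gamma(1 - \PI - \gamma) - \gamma\eps = \gamma^3 - \gamma\eps$ (using $1-\PI-\gamma = \gamma^2$), and observe $\gamma^3 = ((\sqrt{5-4\PI}-1)/2)^3 = (\sqrt{5-4\PI}-1)^3/8 = \nu_0$; so any constant $\nu < \nu_0$ leaves room to pick $\eps$ small enough. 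Then set $\tl := \lceil \log(d/\gamma)\rceil$ so that $d \le \gamma 2^\tl < 2d$ (the resulting $d' := \gamma 2^\tl \in [d, 2d)$ only helps, as a $d'$-resilient matrix is a fortiori $d$-resilient), apply the extractor with these parameters, invoke Theorem~\ref{thm:main}, and collect the bounds.
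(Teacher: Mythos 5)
Your proof is correct and follows essentially the same route as the paper: instantiate Theorem~\ref{thm:main} with the Radhakrishnan--Ta-Shma optimal extractor, note that zero overhead gives $2^{\tl-k'}=1$, and then choose constants $\gamma,\eps$ to satisfy $\PI+\gamma+\nu/\gamma < 1-\eps$, setting $k = \log(d/\gamma)+2\log(1/\eps)+O(1)$ so that $2^k = O(d)$ and $m = 2^{\tee+\tl}=O(d\log n)$. The only cosmetic difference is the choice of $\gamma$: the paper takes $\gamma := \nu^{1/3}$ (handling $\nu=0$ as a separate case), whereas your final argument fixes $\gamma := (\sqrt{5-4\PI}-1)/2$ as the positive root of $\gamma^2+\gamma = 1-\PI$; both choices yield exactly the threshold $\nu_0 = \gamma^3$. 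One small correction to your speculation en route: the form of $\nu_0$ does \emph{not} arise from any coupling between $\eps$ and the sparsity --- $\eps$ really can be taken to be an arbitrarily small positive constant with no penalty beyond absolute constants in $m$. The threshold $\nu_0$ is simply an artifact of the (suboptimal) choice of $\gamma$; your earlier observation that $\gamma = \sqrt{\nu}$ minimizes $\gamma + \nu/\gamma$ and would push the threshold up to $(1-\PI)^2/4 > \nu_0$ is in fact correct, and shows that the stated $\nu_0$ is not the best possible. Since the corollary only asserts the result for $\nu < \nu_0$, either choice of $\gamma$ suffices to prove it.
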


\begin{proof}
  For simplicity we assume that $n = 2^\tn$ for a
  positive integer $\tn$. However, it should be clear that
  this restriction will cause no loss of generality and can be
  eliminated with a slight change in the constants behind the
  asymptotic notations.

  We instantiate the parameters of Theorem~\ref{thm:main} with an
  optimal strong extractor.  If $\nu = 0$, we choose $\gamma, \eps$
  as small constants such that $\gamma+\eps < 1-\PI$. Otherwise, we
  choose $\gamma := \sqrt[3]{\nu}$, which makes $\nu/\gamma =
  \sqrt[3]{\nu^2}$, and $\eps < 1-\PI - \sqrt[3]{\nu} -
  \sqrt[3]{\nu^2}$.  (One can easily see that the right hand side of
  the latter inequality is positive for $\nu < \nu_0$). Hence, the
  condition $\PI + \nu/\gamma < 1 - \eps - \gamma$ required by
  Theorem~\ref{thm:main} is satisfied.

  Let $r = 2 \log (1/\eps) + O(1) = O(1)$ be the entropy loss of the
  extractor for error $\eps$, and set up the extractor for min-entropy
  $k = \log d + \log(1/\gamma) + r$, which means that $K := 2^k =
  O(d)$ and $L := 2^\tl = d/\gamma = O(d)$. Now we can apply
  Theorem~\ref{thm:main} and conclude that the measurement matrix is
  $(\PI m, (\nu/d) m, O(d), 0)$-resilient.  The seed length required
  by the extractor is $\tee \leq \log \tn + 2\log(1/\eps) + O(1)$,
  which gives $T := 2^t = O(\log n)$.  Therefore, the number of
  measurements becomes $m = TL = O(d \log n)$.
\end{proof}

\subsubsection{Optimal Lossless Condensers}

The probabilistic construction of Radhakrishan and Ta-Shma can be
extended to the case of lossless condensers and one can show that a
uniformly random function is with high probability a strong lossless $(k, \eps)$-condenser 
with input length $\tn$, seed length $t = \log \tn + \log(1/\eps) + O(1)$ and
output length $\tl = k+\log(1/\eps)+ O(1)$, and this
trade-off is essentially optimal \cite{ref:CRVW02}.  

Now we instantiate Theorem~\ref{thm:main} with an optimal strong
lossless condenser and obtain the following corollary.

\begin{coro} \label{coro:optcond} For positive integers $n \geq d$ and
  every constant $\delta > 0$ there is an $m \times n$ measurement
  matrix, where $m = O(d \log n)$, that is $(\Omega(m), \Omega(1/d) m,
  \delta d, 0)$-resilient for $d$-sparse vectors of length
  $n$ and allows for a reconstruction algorithm with running time
  $O(mn)$. 
\end{coro}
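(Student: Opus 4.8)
The plan is to instantiate Theorem~\ref{thm:main} with an optimal strong lossless condenser, mirroring the proof of Corollary~\ref{coro:optext} but now taking advantage of the fact that a lossless condenser has zero entropy loss ($k'=k$) at the cost of only a small \emph{constant} overhead. As in that proof I would assume without loss of generality that $n=2^{\tn}$ and, again as in the preceding proofs, suppress integrality issues; I will also assume $n\ge(1+\delta)d$, which is what is needed for the condenser's input length to accommodate the min-entropy chosen below.

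First I would fix the condenser error to be a constant $\eps$ with $\eps<\delta/(1+\delta)$, and let $c:=2^{\tl-k}=\Theta(1/\eps)$ be the (constant) overhead factor of the optimal lossless condenser for this $\eps$ (so its output length is $\tl=k+\log(1/\eps)+O(1)$ and its seed length is $t=\log\tn+\log(1/\eps)+O(1)$). Then I would set the min-entropy to $k:=\log((1+\delta)d)$, so that $2^k=(1+\delta)d$ and the reconstruction false-positive bound in Theorem~\ref{thm:main} becomes exactly $2^k-d=\delta d$, while $L:=2^{\tl}=c\cdot 2^k=\Theta(d)$. Putting $\gamma:=d/2^{\tl}=1/(c(1+\delta))$ --- a fixed constant in $(0,1)$ --- makes $d=\gamma 2^{\tl}$. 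The seed length is then $t=\log\log n+O(1)$, so $T:=2^t=O(\log n)$, the matrix has $m=TL=O(d\log n)$ rows, and the $O(mn)$-time decoder is the one furnished by Theorem~\ref{thm:main}.

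It remains to choose the measurement-noise constants $\PI$ and $\nu$ so that the hypothesis $(\PI+\gamma)2^{\tl-k'}+\nu/\gamma<1-\eps$ of Theorem~\ref{thm:main} is met with $\PI,\nu$ positive constants. Since $k'=k$, the governing term is $\gamma\cdot 2^{\tl-k}=d/2^k=1/(1+\delta)$, which is bounded strictly below $1$ \emph{precisely because} we allow $\delta d>0$ spurious positives in the output; this leaves a constant slack $1-\eps-1/(1+\delta)=\delta/(1+\delta)-\eps>0$. As $2^{\tl-k}=c$ and $1/\gamma=c(1+\delta)$ are now fixed constants, I can take positive constants $\PI,\nu$ small enough that $\PI c+\nu c(1+\delta)<\delta/(1+\delta)-\eps$, which is exactly the required inequality. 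With these choices $e_0=\PI m=\Omega(m)$ and $e_1=(\nu/d)m=\Omega(1/d)\,m$, and Theorem~\ref{thm:main} yields the claimed $(\Omega(m),\Omega(1/d)m,\delta d,0)$-resilient matrix with an $O(mn)$-time reconstruction algorithm.

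I expect the only real content to be the structural point behind that last inequality, rather than any calculation: for a lossless condenser the ``density blow-up'' factor $\gamma\,2^{\tl-k'}$ is pinned to $d/2^k$, and choosing $2^k=(1+\delta)d$ pushes it below $1$, creating exactly the headroom needed to absorb a constant fraction of false positives and an $\Omega(1/d)$ fraction of false negatives in the measurements. The resulting trade-off is genuine (and reflected in Table~\ref{tab:results}): $\eps$ can be neither too small --- the overhead $c=\Theta(1/\eps)$ would then inflate the term $\PI c$ --- nor larger than $\delta/(1+\delta)$, and, unlike in Corollary~\ref{coro:optext}, one cannot push $\PI$ up to an arbitrary constant below $1$, which is why the statement only asserts $e_0=\Omega(m)$.
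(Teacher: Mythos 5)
Your proof is correct and follows essentially the same route as the paper's: instantiate Theorem~\ref{thm:main} with an optimal strong lossless condenser, pin $2^k$ to $(1+\delta)d$ so the false-positive budget in the output is exactly $\delta d$, observe that the overhead $2^{\tl-k}$ and hence $\gamma$ are constants, and then pick small positive constants $\PI,\nu$ to satisfy the resulting inequality. The only cosmetic differences are that the paper fixes $\eps:=\tfrac{1}{2}\delta/(1+\delta)$ and writes $K=d/(1-2\eps)$ (which equals your $(1+\delta)d$), whereas you leave $\eps$ as any constant below $\delta/(1+\delta)$; the derived condition and its satisfiability are identical.
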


\begin{proof}
  We will use the notation of Theorem~\ref{thm:main} and apply it
  using an optimal strong lossless condenser.  This time, we set up the condenser
  with error $\eps := \frac{1}{2} \delta/(1+\delta)$ and min-entropy
  $k$ such that $K := 2^k = d/(1-2\eps)$. As the error is a constant,
  the overhead and hence $2^{\tl-k}$ will also be a constant. The seed
  length is $\tee = \log (\tn/\eps) + O(1)$, which makes $T := 2^\tee
  = O(\log n)$. As $L := 2^\tl = O(d)$, the number of measurements
  becomes $m = TL = O(d \log n)$, as desired.

  Moreover, note that our choice of $K$ implies that $K-d = \delta
  d$.  Thus we only need to choose $\PI$ and $\nu$ appropriately to
  satisfy the condition \begin{equation} \label{eqn:pncondition} (\PI + \gamma)L/K + \nu/\gamma < 1 -
  \eps, \end{equation} where $\gamma = d/L = K/(L(1+\delta))$ is a constant, as
  required by the lemma.  Substituting for $\gamma$ in \eqref{eqn:pncondition}
  and after simple manipulations, we get the
  condition \[ \PI L/K + \nu (L/K) (1+\delta) < \frac{\delta}{2(1+\delta)}, \]
  which can be satisfied by choosing $\PI$ and $\nu$ to be appropriate
  positive constants.
\end{proof}

Both results obtained in Corollaries
\ref{coro:optext}~and~\ref{coro:optcond} almost match the lower bound
of Lemma~\ref{lem:lowerbound} for the number of measurements. However,
we note the following distinction between the two results:
Instantiating the general construction of Theorem~\ref{thm:main} with
an extractor gives us a sharp control over the fraction of tolerable
errors, and in particular, we can obtain a measurement matrix that is
robust against \emph{any} constant fraction (bounded from $1$) of
false positives. However, the number of potential false positives in
the reconstruction will be bounded by some constant fraction of the
sparsity of the vector that cannot be made arbitrarily close to zero.

On the other hand, using a lossless condenser enables us to bring down
the number of false positives in the reconstruction to an arbitrarily
small fraction of $d$ (which is, in light of Lemma~\ref{lem:distance},
the best we can hope for), though it does not give as
good a control on the fraction of tolerable errors as in the extractor
case, though we still obtain resilience against the same order of
errors.


\subsubsection{Applying the Guruswami-Umans-Vadhan's Extractor}

While Corollaries \ref{coro:optext}~and~\ref{coro:optcond} give
probabilistic constructions of noise-resi\-lient measurement matrices,
certain applications require a fully explicit matrix that is
guaranteed to work. To that end, we need to instantiate
Theorem~\ref{thm:main} with an explicit condenser. First, we use a
nearly-optimal explicit extractor due to Guruswami, Umans and Vadhan
\cite{ref:GUV09}, that currently gives the best trade-off for
the range of parameters needed for our application.  The parameters
achieved by this extractor is quoted in the theorem below.

\begin{thm} \cite{ref:GUV09} \label{thm:extr} For all positive
  integers $\tn \geq k$ and all $\eps > 0$, there is an explicit strong
  $(k, \eps)$-extractor $\extr\colon \zo^\tn \times \zo^t \to
  \zo^\tl$ with $\tl = k - 2 \log(1/\eps) - O(1)$ and $t = \log \tn +
  O(\log k \cdot \log (k/\eps))$. \qed
\end{thm}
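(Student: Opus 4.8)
The plan is to prove this the way Guruswami, Umans, and Vadhan do: first build an explicit \emph{lossless} condenser from a Parvaresh--Vardy-style algebraic code, and then bootstrap it into an extractor. For the condenser, identify an input $x \in \zo^{\tn}$ with the coefficient vector of a univariate polynomial $\hat x$ of degree less than $\tn/\log q$ over a field $\F_q$, let the seed specify an element $a \in \F_q$ (so $t \approx \log q$), fix an irreducible $E(Y) \in \F_q[Y]$ of suitable degree, a power parameter $q_0$ polylogarithmic in $\tn k/\eps$, and an integer $h = O(\log k / \log q_0)$, and output $(a,\, \hat x^{(0)}(a),\, \hat x^{(1)}(a),\, \ldots,\, \hat x^{(h-1)}(a)) \in \F_q^{h+1}$, where $\hat x^{(0)} := \hat x$ and $\hat x^{(i+1)} := (\hat x^{(i)})^{q_0} \bmod E(Y)$. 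Every step is field arithmetic on objects of $\poly(\tn, \log q)$ bits, so the map is polynomial-time computable, i.e.\ explicit.

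To establish the lossless-condenser property I would invoke the list-decoding characterization of condensers (the converse direction of Theorem~\ref{thm:list}, also present in \cite{ref:TZ04,ref:GUV09}): it suffices to bound, for every mixture $S$ over the alphabet $\F_q^{h}$, the $(\rho(S)+\eps)$-agreement list of the induced code by $2^k$. The bound is the standard Parvaresh--Vardy argument: if more than $2^k$ codewords had agreement exceeding $\rho(S)+\eps$ with a light mixture, interpolate a nonzero low-degree polynomial $Q(Y_0,\dots,Y_{h-1})$ whose coefficients are themselves low-degree univariate polynomials in the evaluation point, vanishing at all the relevant points; the powering relations $Y_{i+1}=Y_i^{q_0}$ together with $Q\not\equiv 0$ force, for each such codeword, the composition $Q(\hat x, \hat x^{q_0},\dots,\hat x^{q_0^{h-1}}) \bmod E$ to vanish identically, which a degree count rules out once the number of codewords exceeds the interpolation budget. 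Choosing the field size, $q_0$, $h$, and the degree budgets so that this threshold is exactly $2^k$ gives an explicit $(k,\eps)$-lossless condenser with seed length $t = \log\tn + O(\log(k/\eps))$ and output length $(1+\alpha)k + O(\log(1/\eps))$ for any desired constant $\alpha>0$, with $\alpha$ trading off against the constant inside the seed's $O(\cdot)$.

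Finally, to upgrade to an extractor: a single application already produces (with a fresh seed) a source of min-entropy $k$ supported on $O(k)$ bits, i.e.\ constant entropy rate. I would iterate the condenser on this shorter string with a geometrically decreasing overhead parameter $\alpha = 1, \tfrac12, \tfrac14, \dots$; after $O(\log(1/\gamma))$ rounds the output has length $(1+\gamma)k$, hence entropy deficiency $\gamma k$, and since every round after the first has input length $O(k)$ its seed costs only $O((1/\alpha)\log(k/\eps))$, so the geometric sum of seed lengths is $\log\tn + O(\log k \cdot \log(k/\eps))$ --- this is exactly the claimed seed bound. One then composes with any explicit extractor for sources of min-entropy rate arbitrarily close to $1$ (equivalently, small entropy deficiency), which needs only a further $O(\log(k/\eps))$ seed bits and loses only $2\log(1/\eps)+O(1)$ bits, producing the stated output length $\tl = k - 2\log(1/\eps) - O(1)$.

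The main obstacle is the algebraic list-size bound together with the parameter bookkeeping: one must pick the field size, the power $q_0$, the number of powers $h$, and the interpolation degree budget so that the agreement list is provably below $2^k$ at error $\eps$, while simultaneously keeping the seed at $\log\tn + O(\log k \cdot \log(k/\eps))$ across the whole iteration, so that the geometric series of per-round seed lengths does not blow up. By comparison, explicitness and the final high-rate extraction step are routine given standard tools.
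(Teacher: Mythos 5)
The paper does not prove this statement; it is quoted directly from Guruswami--Umans--Vadhan \cite{ref:GUV09} (Theorem~1.5 there) and invoked as a black box, so there is no in-paper proof to compare against. Your sketch follows the GUV route at a high level---a Parvaresh--Vardy style lossless condenser established by polynomial interpolation, then repeated condensing followed by a final high-entropy-rate extraction---and the condenser half (the algebraic construction, its explicitness, and the list-size bound via the interpolation argument) is an accurate summary.

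The iteration step, however, has a genuine gap in the bookkeeping. With $\alpha_i = 2^{-i}$ and input length $O(k)$ after the first round, the per-round seed of the GUV lossless condenser is $(1+1/\alpha_i)\log\bigl(O(k)\cdot k/\eps\bigr)+O(1) = \Theta\bigl(2^i\log(k/\eps)\bigr)$, so after $\ell$ rounds the total seed is $\Theta\bigl(2^\ell \log(k/\eps)\bigr)$. This matches the target $O(\log k\cdot\log(k/\eps))$ only when $\ell = O(\log\log k)$, which forces $\gamma = 2^{-\ell} = \Omega(1/\log k)$ and a residual entropy deficiency of $\gamma k = \Omega(k/\log k)$. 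But an explicit extractor for deficiency-$\Delta$ sources (Goldreich--Wigderson style or via the Leftover Hash Lemma) needs seed $\Omega(\Delta+\log(1/\eps))$ and loses $\Omega(\Delta)$ output bits; with $\Delta = \Omega(k/\log k)$ both the extra seed and the output loss dwarf the claimed $O(\log(k/\eps))$ and $2\log(1/\eps)+O(1)$. Conversely, if you push $\gamma$ down to $O(\log(1/\eps)/k)$ so that the final extraction is cheap, the last condensing round alone costs seed $\Theta\bigl((1/\gamma)\log(k/\eps)\bigr) = \Omega(k)$. So the geometrically-decreasing-$\alpha$ scheme, as written, cannot simultaneously achieve seed $\log\tn + O(\log k\cdot\log(k/\eps))$ and output $k-2\log(1/\eps)-O(1)$. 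The $\log k$ factor in GUV's seed comes from a more delicate repeated-condensing argument (roughly, $O(\log k)$ rounds each with fixed constant overhead and seed $O(\log(k/\eps))$, composed via a win-win/block-source analysis rather than by shrinking $\alpha$ geometrically), and you would need to import that mechanism to close the gap.
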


Using this
extractor, we obtain a similar trade-off as in
Corollary~\ref{coro:optext}, except for a higher number of
measurements which would be bounded by $O(2^{O(\log^2 \log d)} d \log
n) = O(d^{1+o(1)} \log n)$.

\begin{coro} \label{coro:guv} For every choice of constants $\PI \in
  [0, 1)$ and $\nu \in [0, \nu_0)$, $\nu_0 := (\sqrt{5-4\PI} -
  1)^3/8$, and positive integers $d$ and $n \geq d$, there is a fully
  explicit $m \times n$ measurement matrix, where \[ m = O(2^{O(\log^2
    \log d)} d \log n) = O(d^{1+o(1)} \log n), \] that is $(\PI m,
  (\nu/d) m, O(d), 0)$-resilient for $d$-sparse vectors of length $n$
  and allows for a reconstruction algorithm with running time
  $O(mn)$. \qed
\end{coro}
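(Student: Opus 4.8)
The plan is to mirror the proof of Corollary~\ref{coro:optext}, replacing the non-explicit optimal extractor of Radhakrishnan--Ta-Shma with the explicit Guruswami--Umans--Vadhan extractor of Theorem~\ref{thm:extr}, and tracking how the worse (but still near-optimal) seed length propagates into the final bound on $m$. First I would fix the constants exactly as in Corollary~\ref{coro:optext}: given $\PI \in [0,1)$ and $\nu \in [0,\nu_0)$ with $\nu_0 := (\sqrt{5-4\PI}-1)^3/8$, set $\gamma := \sqrt[3]{\nu}$ (or a small constant if $\nu=0$) so that $\nu/\gamma = \sqrt[3]{\nu^2}$, and pick the error $\eps$ to be a small enough \emph{constant} with $\eps < 1-\PI-\sqrt[3]{\nu}-\sqrt[3]{\nu^2}$; the elementary calculation in Corollary~\ref{coro:optext} shows this right-hand side is positive precisely when $\nu<\nu_0$, so the hypothesis $(\PI+\gamma)2^{\tl-k'}+\nu/\gamma < 1-\eps$ of Theorem~\ref{thm:main} is met (here $k'=k$ and $2^{\tl-k}$ is the constant $2^{-2\log(1/\eps)-O(1)}<1$ coming from the extractor's overhead).

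Next I would set up the extractor's min-entropy. As before, take $k := \log d + \log(1/\gamma) + 2\log(1/\eps) + O(1)$ so that $K := 2^k = O(d)$ and $L := 2^{\tl} = 2^{k-2\log(1/\eps)-O(1)} = d/\gamma = O(d)$; with $d = \gamma 2^{\tl}$ this is consistent with Theorem~\ref{thm:main}'s requirement $d := \gamma 2^{\tl}$. Applying Theorem~\ref{thm:main} then yields an $(\PI m, (\nu/d)m, 2^k-d, 0)$-resilient matrix with $2^k-d = O(d)$, and a reconstruction algorithm running in time $O(mn)$. The only quantity that differs from Corollary~\ref{coro:optext} is the seed length: Theorem~\ref{thm:extr} gives $t = \log\tn + O(\log k \cdot \log(k/\eps))$. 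Since $\tn = \log n$, $k = \log d + O(1)$, and $\eps = \Theta(1)$, this is $t = \log\log n + O(\log\log d \cdot \log\log d) = \log\log n + O(\log^2\log d)$, whence $T := 2^t = O(2^{O(\log^2\log d)}\log n)$. Therefore $m = TL = O(2^{O(\log^2\log d)}\, d\log n)$, and since $2^{O(\log^2\log d)} = d^{o(1)}$ this is $O(d^{1+o(1)}\log n)$, as claimed. Explicitness of the matrix follows because each entry of the codeword-graph adjacency matrix is determined by a single evaluation of the explicit function $\extr$, computable in time $\poly(\tn, t, \tl) = \poly(\log n, \log d) = \poly(m, \log n)$, matching Definition~\ref{def:matrix}.

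I do not expect a genuine obstacle here; the statement is essentially a bookkeeping corollary. The one point needing slight care is confirming that $2^{O(\log^2\log d)}$ is indeed $d^{o(1)}$ — i.e.\ that $\log^2\log d = o(\log d)$ — which is immediate since $\log\log d = o(\sqrt{\log d})$. A second minor point is the reduction to the case $n = 2^{\tn}$: as in Corollary~\ref{coro:optext}, rounding $n$ up to the next power of two changes $\tn$ by at most $1$ and affects only the hidden constants, so it is without loss of generality. Everything else is a direct transcription of the extractor-based instantiation with the GUV seed-length bound substituted in.
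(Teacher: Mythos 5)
Your proposal is correct and matches the paper's intended argument: the paper states Corollary~\ref{coro:guv} without a separate proof precisely because it is obtained by rerunning the proof of Corollary~\ref{coro:optext} verbatim with the Guruswami--Umans--Vadhan extractor of Theorem~\ref{thm:extr} in place of the optimal non-explicit extractor, and your bookkeeping of the resulting seed length $t = \log\log n + O(\log^2\log d)$ and hence $m = O(2^{O(\log^2\log d)}\,d\log n)$ is exactly the intended computation.
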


\subsubsection{Applying ``Zig-Zag'' Lossless Condenser}

An important explicit construction of lossless condensers that has an
almost optimal output length is due to Capalbo et
al.~\cite{ref:CRVW02}.  This construction borrows the notion of
``zig-zag products'' that is a combinatorial tool for construction of
expander graphs as a major ingredient of the condenser.  The following
theorem quotes a setting of this construction that is most useful for
our application.

\begin{thm} \cite{ref:CRVW02} \label{thm:CRVW} For every $k \leq \tn \in
  \N$, $\eps > 0$ there is an explicit lossless 
  $(\tn, k)$-condenser\footnote{Though not explicitly mentioned in
    \cite{ref:CRVW02}, these condensers satisfy the ``strong''
    definition of condensers as in Definition~\ref{def:condenser}.} 
    with seed length $t=O(\log^3 (\tn/\eps))$ and output length
  $\tl=k+\log(1/\eps)+O(1)$. \qed
\end{thm}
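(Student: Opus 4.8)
The statement is quoted verbatim from Capalbo, Reingold, Vadhan and Wigderson~\cite{ref:CRVW02}, so a genuinely self-contained proof is out of scope; here I outline the route one would follow. The first move is to pass to the combinatorial (expander) view of lossless condensers used throughout Section~\ref{sec:construction}: a strong lossless $(\tn,k)$-condenser $f\colon \zo^\tn\times\zo^\tee\to\zo^\tl$ with $\tl=k+\log(1/\eps)+O(1)$ is the same object as a bipartite graph on $2^\tn$ left vertices and roughly $2^\tee\cdot 2^\tl$ right vertices with left-degree $2^\tee$, in which every left set $S$ with $|S|\le 2^k$ has at least $(1-\eps)\,2^\tee|S|$ distinct neighbours. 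Thus it suffices to construct near-optimal unbalanced lossless expanders whose seed length $\tee$ (equivalently, the logarithm of the left-degree) is polylogarithmic, namely $\tee=O(\log^3(\tn/\eps))$.

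The engine for this is the theory of \emph{randomness conductors}: objects that simultaneously behave like condensers/extractors/expanders across an entire range of input min-entropies, each regime specified by an ``entropy profile'' parameter. Conductors are engineered to be closed under two operations: sequential \emph{composition} (feed the output of one conductor as the input of the next, concatenating seeds) and a \emph{zig-zag-type product} (combine a conductor $C_1$ on the large space with a conductor $C_2$ acting on the much smaller seed space of $C_1$, so that the composite inherits only $C_2$'s short seed at the cost of a tiny additive entropy loss). One first assembles a handful of building blocks: the trivial identity conductor (seed length equal to input length, lossless but with no compression); small-input-length conductors obtained from known explicit extractors (of Trevisan / Reingold--Vadhan--Wigderson type), whose seeds are polylogarithmic in their own small input length; and explicit expander graphs (Ramanujan graphs, or the zig-zag expanders of Reingold--Vadhan--Wigderson) to serve as ``buffer'' conductors handling the high end of the entropy spectrum where the input is nearly full entropy and ordinary extractors are useless.

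The heart of the argument is the zig-zag product for conductors and its analysis, generalizing the Reingold--Vadhan--Wigderson expander product: one splits into the case where the ``entropy deficiency'' sits mostly on the large-input side---where $C_1$ already produces enough entropy on its own---and the case where it sits mostly on the seed side---where $C_2$ recycles enough of the seed's entropy back into the output; in each case one accounts for every bit of min-entropy, so losslessness is preserved up to an $O(1)$ (or $\eps$-scale) additive slack. Starting from the identity conductor on $\zo^\tn$ and iterating this product against the small base conductors, the seed length shrinks geometrically from $\tn$ toward $\poly(\log(\tn/\eps))$; after the appropriate number of rounds the seed is $O(\log^3(\tn/\eps))$, the total accumulated entropy loss is still $O(1)$ (absorbed into the ``$+O(1)$'' of $\tl$), and restricting the resulting conductor to inputs of min-entropy exactly $k$ yields the desired lossless condenser. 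Taking care that every block and every product is \emph{strong} (the output retains enough of the seed to recover it) gives the parenthetical strongness claim noted in the footnote.

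The main obstacle is precisely the conductor machinery: formulating the abstraction with the right entropy-profile parameters, and proving that the zig-zag product preserves losslessness with only $O(1)$ total slack over the polylogarithmically many iterations, while simultaneously tracking seed length, output length, and error so that the final bounds land exactly at $\tee=O(\log^3(\tn/\eps))$ and $\tl=k+\log(1/\eps)+O(1)$. The high-entropy ``buffer'' regime, handled via explicit expander graphs rather than extractors, is a second delicate point, and dovetailing it with the conductor iteration is where most of the technical work in \cite{ref:CRVW02} lies.
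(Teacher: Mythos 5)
The paper offers no proof of this theorem---it is quoted directly from Capalbo, Reingold, Vadhan and Wigderson with only a citation---and you correctly identify this and give a faithful outline of the external argument (lossless unbalanced expanders via randomness conductors, the zig-zag product for conductors, composition, and the high-entropy buffer blocks). Your treatment is consistent with the paper's and accurate as a summary of \cite{ref:CRVW02}.
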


Combining Theorem~\ref{thm:main} with the above condenser, we obtain a
similar result as in Corollary~\ref{coro:optcond}, except that the
number of measurements now becomes $d 2^{\log^3(\log n)}= d \cdot
\qpoly(\log n)$.

\begin{coro} \label{coro:crvw} For positive integers $n \geq d$ and
  every constant $\delta > 0$ there is a fully explicit $m \times n$
  measurement matrix, where \[ m = d 2^{\log^3(\log n)}= d \cdot
  \qpoly(\log n),\] that is $(\Omega(m), \Omega(1/d) m, \delta d,
  0)$-resilient for $d$-sparse vectors of length $n$ and allows for a
  reconstruction algorithm with running time $O(mn)$. \qed
\end{coro}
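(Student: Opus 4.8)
The plan is to repeat the argument of Corollary~\ref{coro:optcond} almost verbatim, replacing the non-explicit optimal lossless condenser by the explicit ``zig-zag'' lossless condenser of Theorem~\ref{thm:CRVW} (which, by the footnote attached to that theorem, may be taken to be strong in the sense of Definition~\ref{def:condenser}). First I would fix the error parameter of the condenser to the \emph{constant} $\eps := \frac12\,\delta/(1+\delta)$ and choose the min-entropy requirement $k$ so that $K := 2^k = d/(1-2\eps)$; note that $k = \log d + O(1) \le \tn$ since $\tn = \log n \ge \log d$, so Theorem~\ref{thm:CRVW} indeed applies for this setting. Because $\eps$ is a constant, the theorem gives output length $\tl = k + \log(1/\eps) + O(1) = k + O(1)$, hence the overhead $2^{\tl-k}$ is a constant and $L := 2^\tl = O(d)$; moreover $K - d = \delta d$ by the choice of $K$.

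Next I would invoke Theorem~\ref{thm:main} with $\gamma := d/L = K/(L(1+\delta))$, which is a constant bounded away from $0$ and from $1$. It remains to pick the noise parameters $\PI$ and $\nu$ so that the hypothesis $(\PI+\gamma)2^{\tl-k} + \nu/\gamma < 1-\eps$ of Theorem~\ref{thm:main} holds. Substituting $2^{\tl-k} = L/K$ and $\gamma = K/(L(1+\delta))$ and simplifying, this reduces --- exactly as in the proof of Corollary~\ref{coro:optcond} --- to $\PI\,(L/K) + \nu\,(L/K)(1+\delta) < \delta/(2(1+\delta))$, whose right-hand side is a positive constant; since $L/K$ is a constant, the inequality is satisfied by taking $\PI$ and $\nu$ to be sufficiently small positive constants. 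Theorem~\ref{thm:main} then yields a $(\PI m,\,(\nu/d)m,\,K-d,\,0) = (\Omega(m),\,\Omega(1/d)\,m,\,\delta d,\,0)$-resilient measurement matrix with $n = 2^{\tn}$ columns and a reconstruction algorithm running in time $O(mn)$.

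Finally I would count the measurements. By Theorem~\ref{thm:CRVW} the seed length is $\tee = O(\log^3(\tn/\eps)) = O(\log^3\log n)$, since $\eps$ is constant and $\tn = \log n$; hence $T := 2^{\tee} = 2^{O(\log^3\log n)} = \qpoly(\log n)$. Combined with $L = O(d)$ this gives $m = TL = d\cdot 2^{\log^3(\log n)} = d\cdot\qpoly(\log n)$, and full explicitness of the matrix follows from explicitness of the condenser, since each entry of the adjacency matrix of the codeword graph is obtained from a single evaluation of $f$. There is no genuine obstacle here beyond bookkeeping; the only point that warrants a moment of care is confirming that the \emph{strong} condenser property required by Theorem~\ref{thm:main} (and hence Theorem~\ref{thm:list}) is actually available from the construction of \cite{ref:CRVW02} --- which is precisely what the footnote to Theorem~\ref{thm:CRVW} asserts.
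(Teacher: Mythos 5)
Your proposal is correct and takes exactly the route the paper intends: the paper simply asserts that Corollary~\ref{coro:crvw} follows from Theorem~\ref{thm:main} "as in Corollary~\ref{coro:optcond}" with the CRVW condenser substituted, and you have filled in precisely that argument, including the parameter choice $\eps = \tfrac12\delta/(1+\delta)$, $K = d/(1-2\eps)$, the verification that $K-d = \delta d$, the reduction of the hypothesis of Theorem~\ref{thm:main} to the same linear inequality in $\PI$ and $\nu$, and the seed-length accounting $T = 2^{O(\log^3 \log n)}$ giving $m = d\cdot\qpoly(\log n)$. Your remark on the need for the \emph{strong} condenser property (covered by the footnote to Theorem~\ref{thm:CRVW}) and on full explicitness via pointwise evaluation of the condenser is exactly the right bookkeeping.
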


\subsubsection{Measurements Allowing Sublinear Time Reconstruction}
\index{sublinear time reconstruction}
  
The naive reconstruction algorithm given by Theorem~\ref{thm:main}
works efficiently in linear time in the size of the measurement
matrix. However, for very sparse vectors (i.e., $d \ll n$), it might
be of practical importance to have a reconstruction algorithm that
runs in \emph{sublinear} time in $n$, the length of the vector, and
ideally, polynomial in the number of measurements, which is merely
$\poly(\log n, d)$ if the number of measurements is optimal.

As shown in \cite{ref:TZ04}, if the code $\cC$ in
Theorem~\ref{thm:list} is obtained from a strong extractor constructed
from a \emph{black-box pseudorandom generator (PRG)}, it is possible
to compute the agreement list (which is guaranteed by the theorem to
be small) more efficiently than a simple exhaustive search over all
possible codewords.  In particular, in this case they show that
$\List_\cC(S, \rho(S)+\eps)$ can be computed in time $\poly(2^\tee,
2^\tl, 2^k, 1/\eps)$ (where $\tee, \tl, k, \eps$ are respectively the
seed length, output length, entropy requirement, and error of the
extractor), which can be much smaller than $2^\tn$ ($\tn$ being the
input length of the extractor).

Currently two constructions of extractors from black-box PRGs are
known: Trevisan's extractor \cite{ref:Tre} (as well as its improvement
in \cite{ref:RRV}) and Shaltiel-Umans' extractor \cite{ref:SU}.
However, the latter can only extract a sub-constant fraction of the
min-entropy and is not suitable for our needs, albeit it requires a
considerably shorter seed than Trevisan's extractor.  Thus, here we
only consider an improvement of Trevisan's extractor given by Raz
et al., quoted below. 

\begin{thm} \cite{ref:RRV} \label{thm:Tre} For every $\tn, k, \tl \in
  \N$, $(\tl \leq k \leq \tn)$ and $\eps > 0$, there is an explicit
  strong $(k, \eps)$-extractor $\tre\colon \zo^\tn \times \zo^t \to
  \zo^\tl$ with $t = O(\log^2 (\tn/\eps) \cdot \log (1/\alpha))$,
  where $\alpha := k/(\tl-1) - 1$ must be less than $1/2$. \qed
\end{thm}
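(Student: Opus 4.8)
The statement is the Raz--Reingold--Vadhan refinement of Trevisan's extractor, so the plan is to reconstruct that construction and its analysis at a sketch level. The construction follows Trevisan's paradigm \cite{ref:Tre}: combine a list-decodable binary code with a combinatorial design. First I would fix a binary code $C\colon\zo^{\tn}\to\zo^{\bar n}$ that is (i) encodable in time $\poly(\tn,1/\eps)$, (ii) list-decodable from relative radius $\tfrac12-\tfrac{\eps}{4\tl}$ with list size $\poly(\tn/\eps)$, and (iii) of block length $\bar n=\poly(\tn/\eps)$ --- concretely, the concatenation of a Reed--Solomon code with the Hadamard code, where the Johnson bound gives the stated list size. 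Writing $\ell':=\lceil\log\bar n\rceil=O(\log(\tn/\eps))$, regard the encoding $\bar x:=C(x)$ as a function $\bar x\colon\zo^{\ell'}\to\zo$. The second ingredient is a \emph{weak $(\ell',\rho)$-design}: a family $S_1,\dots,S_{\tl}\subseteq[t]$ with $|S_i|=\ell'$ for all $i$ and $\sum_{j<i}2^{\,|S_i\cap S_j|}\le\rho(\tl-1)$ for every $i$, where $\rho=1+\Theta(\alpha)$ will be fixed in the analysis. The weak-design construction of \cite{ref:RRV} produces such a family explicitly (greedily, inserting one set at a time) with ground-set size $t=O\!\left(\ell'^2\log(1/\alpha)\right)$, which is exactly where the $\log^2(\tn/\eps)\cdot\log(1/\alpha)$ bound originates; the restriction $\alpha<1/2$ (equivalently $\rho<3/2$) is the range in which the greedy insertion analysis goes through. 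The extractor is then $\tre(x,y):=\bigl(\bar x(y|_{S_1}),\dots,\bar x(y|_{S_{\tl}})\bigr)\in\zo^{\tl}$, where $y|_{S}$ denotes the restriction of $y\in\zo^t$ to the coordinates in $S$; explicitness of $\tre$ is immediate from explicitness of $C$ and of the design.

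The analysis is the standard reconstruction argument, carried out in the contrapositive. Suppose $\tre$ fails to be a strong $(k,\eps)$-extractor: there is a source $X$ on $\zo^{\tn}$ with $H_\infty(X)\ge k$ and a test $T\colon\zo^t\times\zo^{\tl}\to\zo$ distinguishing $(Y,\tre(X,Y))$ from $(Y,\U_{\tl})$ with advantage $>\eps$, $Y\sim\U_t$. Yao's hybrid argument over the $\tl$ output coordinates yields an index $i$ and a next-bit predictor $P$ that, given the seed and the first $i-1$ output bits, guesses $\bar x(y|_{S_i})$ with advantage $>\eps/\tl$ over $X$ and $Y$. Averaging over $X$, a set $B\subseteq\zo^{\tn}$ on which $X$ places mass at least $\eps/(2\tl)$ --- hence $|B|\ge(\eps/2\tl)\,2^{k}$, since every atom of $X$ has mass at most $2^{-k}$ --- consists of inputs $x$ for which $P$ retains advantage $\ge\eps/(2\tl)$ over $Y$. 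For each such $x$ I would fix the seed bits outside $S_i$ to the value maximising this advantage; then each earlier output bit $\bar x(y|_{S_j})$, $j<i$, depends only on the $|S_i\cap S_j|$ seed bits in $S_i\cap S_j$ and is recorded by a truth table of $2^{|S_i\cap S_j|}$ bits. Feeding these truth tables into $P$ produces a function $h\colon\zo^{\ell'}\to\zo$ agreeing with $\bar x$ on more than a $\tfrac12+\tfrac{\eps}{2\tl}$ fraction of inputs; list-decoding $C$ around $h$ gives at most $\poly(\tn/\eps)$ candidates, one of which is $x$. So $x$ is pinned down by the $\tl-1$ truth tables --- of total length $\sum_{j<i}2^{|S_i\cap S_j|}\le\rho(\tl-1)$ by the weak-design property --- plus $O(\log(\tn\tl/\eps))$ further bits (the index $i$ and the position in the list), while $P$ itself is shared across all of $B$ and costs nothing. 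Hence $|B|\le 2^{\rho(\tl-1)+O(\log(\tn\tl/\eps))}$, and choosing $\rho$ so that $\rho(\tl-1)+O(\log(\tn\tl/\eps))<k+\log(\eps/2\tl)$ contradicts the lower bound on $|B|$.

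It remains to check that this choice of $\rho$ is compatible with the parameters. The constraint above is satisfied as soon as $k/(\tl-1)-1$ exceeds a quantity of order $\log(\tn/\eps)/(\tl-1)$, which is swept into the asymptotic constants, and it forces $\rho=1+\Theta(\alpha)$ with $\alpha=k/(\tl-1)-1$ --- precisely the regime named in the theorem. Plugging $\ell'=O(\log(\tn/\eps))$ into the weak-design bound then gives $t=O\!\left(\log^2(\tn/\eps)\cdot\log(1/\alpha)\right)$, and the reconstruction argument above delivers the strong $(k,\eps)$-extractor property with output length $\tl$. (The separate error-reduction device of \cite{ref:RRV} sharpens the dependence on $\eps$ further but is not needed for the bound stated here, since $\eps$ enters only through the code's block length.)

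The main obstacle is the weak-design construction itself: Trevisan's original designs require every pairwise intersection to satisfy $|S_i\cap S_j|\le\log\tl$, which makes the design cost $\sum_{j<i}2^{|S_i\cap S_j|}$ as large as $\tl^2$ and thereby limits the output length to roughly $\sqrt{k}$; the RRV improvement replaces this by the averaged bound $\sum_{j<i}2^{|S_i\cap S_j|}=O(\tl)$ and must exhibit a greedy (or probabilistic) insertion procedure that keeps this running sum controlled while spending only $O(\ell'^2\log(1/\alpha))$ ground-set coordinates --- this is the technical heart and the step I expect to be hardest to get right. A secondary delicate point is the accounting in the reconstruction step: one must be careful that every piece of advice used to recover $x$ is either common to all of $B$ (the predictor $P$) or charged against the length budget, and in particular that the seed coordinates outside $S_i$ are absorbed into the small per-coordinate truth tables rather than stored explicitly, so that the total advice length stays strictly below $k$.
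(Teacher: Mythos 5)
The paper does not prove this theorem: it is imported verbatim from \cite{ref:RRV}, and the terminating $\qed$ marks a citation rather than an argument, so there is no internal proof to compare your sketch against. Your blind reconstruction of the RRV argument does, however, contain a gap that would break the stated seed-length bound and is worth flagging.

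You attribute the seed length $t=O(\ell'^2\log(1/\alpha))$ directly to the weak-design lemma, asserting that the greedy insertion keeps $\sum_{j<i}2^{|S_i\cap S_j|}\le\rho(\tl-1)$ with $\rho=1+\Theta(\alpha)$ while spending only $O(\ell'^2\log(1/\alpha))$ ground-set coordinates, and that $\alpha<1/2$ is the range where this greedy analysis works. Neither matches what the weak-design construction of \cite{ref:RRV} actually gives. Partitioning $[t]$ into $\ell'$ blocks of size $b$ and choosing one coordinate per block, a uniformly random $S_i$ satisfies $\Ex\bigl[2^{|S_i\cap S_j|}\bigr]=(1+1/b)^{\ell'}$ for each fixed $S_j$; enforcing $(1+1/b)^{\ell'}\le\rho$ forces $b\gtrsim\ell'/\ln\rho$ and hence $t=\ell' b=\Theta(\ell'^2/\ln\rho)$. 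This greedy argument works for \emph{every} $\rho>1$ (so $\alpha<1/2$ is not its boundary of validity), and for $\rho=1+\Theta(\alpha)$ it gives $t=\Theta(\ell'^2/\alpha)$, which for small $\alpha$ is exponentially worse than $O(\ell'^2\log(1/\alpha))$. Substituting the correct weak-design bound into your analysis yields only $t=O(\log^2(\tn/\eps)/\alpha)$; in the regime where the paper actually invokes Theorem~\ref{thm:Tre} (Corollary~\ref{coro:efficient}, where $\alpha=\Theta(1/\log d)$), this would make $2^t$ superpolynomial in $n$ rather than the claimed $\qpoly(\log n)$.

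The missing idea is the iterative construction from the later part of \cite{ref:RRV}: fix a \emph{constant} weak-design ratio (say $\rho=2$), build a single-shot extractor that outputs a constant fraction of the min-entropy with seed length $O(\log^2(\tn/\eps))$, argue that conditioned on a typical output the source still carries most of the residual min-entropy, and recurse on the residual with a fresh seed. After $O(\log(1/\alpha))$ levels the extracted fraction reaches $1-\alpha$, and concatenating the per-level seeds yields the total $O(\log^2(\tn/\eps)\cdot\log(1/\alpha))$. This recursion, together with the conditional-min-entropy bookkeeping that makes it sound, is a genuinely separate ingredient on top of the single-shot next-bit-predictor argument you describe, and it is where the $\log(1/\alpha)$ dependence truly originates.
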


 Using this extractor in
Theorem~\ref{thm:main}, we obtain a measurement matrix for which the
reconstruction is possible in polynomial time in the number of
measurements; however, as the seed length required by this extractor
is larger than that of the extractor in Theorem~\ref{thm:extr}, we now require a higher
number of measurements than before. Specifically, using Trevisan's
extractor, we get the following result (the proof is essentially
the same as Corollary~\ref{coro:optext} but using the parameters of
Theorem~\ref{thm:Tre} and the efficient list-decoding algorithm
developed for this extractor in \cite{ref:TZ04}).

\begin{coro} \label{coro:efficient} For every choice of constants $\PI
  \in [0, 1)$ and $\nu \in [0, \nu_0)$, $\nu_0 := (\sqrt{5-4\PI} -
  1)^3/8$, and positive integers $d$ and $n \geq d$, there is a fully
  explicit $m \times n$ measurement matrix $\cM$ that is $(\PI m,
  (\nu/d) m, O(d), 0)$-resilient for $d$-sparse vectors of length $n$,
  where \[ m = O(d 2^{\log^3 \log n}) = d \cdot \qpoly(\log n). \]
  Furthermore, $\cM$ allows for a reconstruction algorithm with
  running time $\poly(m)$, which would be sublinear in $n$ for $d =
  O(n^c)$ and a suitably small constant $c > 0$.  \qed \end{coro}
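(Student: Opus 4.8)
The plan is to rerun the proof of Corollary~\ref{coro:optext} almost verbatim, making two substitutions: (i) use the explicit extractor $\tre$ of Theorem~\ref{thm:Tre} in place of the non-explicit optimal extractor, and (ii) replace the brute-force agreement-list computation in the decoder of Theorem~\ref{thm:main} by the reconstructive list-decoder of Ta-Shma and Zuckerman~\cite{ref:TZ04}. As usual assume $n=2^{\tn}$ (boundary cases are absorbed into constants). Given the target constants $\PI$ and $\nu<\nu_0$, pick $\gamma:=\sqrt[3]{\nu}$ (or an arbitrarily small constant if $\nu=0$) and a small constant $\eps>0$ with $\PI+\gamma+\nu/\gamma<1-\eps$, exactly as in Corollary~\ref{coro:optext}. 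Set the min-entropy of $\tre$ to $k=\log d+\log(1/\gamma)+2\log(1/\eps)+O(1)$; then Theorem~\ref{thm:Tre} yields output length $\tl$ with $L:=2^\tl=d/\gamma=O(d)$, and, writing $K:=2^k=O(d)$, the hypothesis $(\PI+\gamma)2^{\tl-k'}+\nu/\gamma<1-\eps$ of Theorem~\ref{thm:main} holds (with $k'=\tl$, since $\tre$ is an extractor). Theorem~\ref{thm:main} then shows that the adjacency matrix $\cM$ of the codeword graph of $\cC:=\cC(\tre)$ is $(\PI m,(\nu/d)m,K-d,0)=(\PI m,(\nu/d)m,O(d),0)$-resilient for $d$-sparse vectors, where $m=TL$ and $T:=2^t$.

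It remains to bound the seed length $t$. Theorem~\ref{thm:Tre} requires $\alpha:=k/(\tl-1)-1<1/2$; since $k-\tl=O(1)$ and $k=\log d+O(1)$, we have $\alpha=\Theta(1/\log d)$, which is below $1/2$ once $d$ exceeds an absolute constant (and the statement is trivial otherwise). With $\eps=\Theta(1)$ and $\log\log d\le\log\log n$ this gives $t=O(\log^2(\tn/\eps)\cdot\log(1/\alpha))=O(\log^2\log n\cdot\log\log n)=O(\log^3\log n)$, so $T=2^{O(\log^3\log n)}=\qpoly(\log n)$ and $m=TL=O(d\,2^{\log^3\log n})=d\cdot\qpoly(\log n)$, as claimed.

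For the reconstruction algorithm, recall that the decoder in the proof of Theorem~\ref{thm:main} builds the mixture $S$ over $[L]^{T}$ from the noisy received word $\hat y$ and outputs the indicator vector of the codewords of $\cC$ whose agreement with $S$ exceeds $1-\nu/\gamma$; the only step that is not already $\poly(m)$ is the scan over all $n$ codewords. Since $\tre$ is obtained from a black-box pseudorandom generator, \cite{ref:TZ04} computes $\List_\cC(S,\rho(S)+\eps)$ (which, by Theorem~\ref{thm:list}, has size $<K$) in time $\poly(2^t,2^\tl,2^k,1/\eps)$, and this equals $\poly(m)$ because $2^t=T$, $2^\tl=L=O(d)$, $2^k=K=O(d)$, $1/\eps=O(1)$ and $m=TL$. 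By the choice of parameters $1-\nu/\gamma>\rho(S)+\eps$ (recall $\rho(S)\le\PI+\gamma$ from the proof of Theorem~\ref{thm:main}), so the codewords we want form a subset of this list; we filter it --- it has fewer than $K=O(d)$ members, and each agreement check costs another $\poly(m)$ --- and output the indicator vector of the survivors. Correctness, namely that this vector contains $\supp(x)$ and has weight below $2^k$ (hence at most $2^k-d=O(d)$ false positives and no false negatives), is the argument of Theorem~\ref{thm:main} unchanged. Finally $\poly(m)=\poly(d)\cdot\qpoly(\log n)$, which is $o(n)$ whenever $d=O(n^c)$ for a sufficiently small constant $c>0$.

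The parameter manipulation is identical to Corollary~\ref{coro:optext} and the correctness of the output is inherited verbatim from Theorem~\ref{thm:main}, so the one point that genuinely needs attention is that the Ta-Shma and Zuckerman running-time bound $\poly(2^t,2^\tl,2^k,1/\eps)$ really collapses to $\poly(m)$ --- which it does precisely because the codeword graph is built with $T=2^t$ seed values, alphabet size $L=2^\tl=\Theta(d)$ and agreement-list bound $K=2^k=\Theta(d)$, all polynomially related to $m=TL$ --- together with verifying that the side condition $\alpha<1/2$ of Theorem~\ref{thm:Tre} is harmless for the entropy regime $k=\log d+O(1)$ that we need (which is where the extra $\log(1/\alpha)=O(\log\log n)$ factor, and hence the $\log^3\log n$ rather than $\log^2\log n$, enters the seed length).
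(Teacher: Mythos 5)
Your proposal is correct and follows exactly the route the paper indicates: rerun Corollary~\ref{coro:optext} with the explicit extractor of Theorem~\ref{thm:Tre} in place of the optimal non-explicit one, check that the side condition $\alpha<1/2$ is benign in the $k=\log d+O(1)$ regime (which is where the extra $\log(1/\alpha)=O(\log\log n)$ factor makes the seed $O(\log^3\log n)$), and replace the brute-force codeword scan by the Ta-Shma--Zuckerman list decoder, whose $\poly(2^t,2^\tl,2^k,1/\eps)$ running time collapses to $\poly(m)$ because $T$, $L$, $K$, $1/\eps$ are all $\poly(m)$. The paper itself only gestures at this ("essentially the same as Corollary~\ref{coro:optext} ... using ... the efficient list-decoding algorithm of \cite{ref:TZ04}"), so your write-up is a faithful and correctly detailed elaboration of the intended argument.
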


On the condenser side, we observe that a family of lossless (and lossy)
condensers due to Guruswami et al.\ also allow efficient list-recovery.  
The parameters of their lossless condenser can be set up as follows.

\begin{thm} \cite{ref:GUV09} For all constants $\alpha \in (0,1)$ and
  every $k \leq \tn \in \N$, $\eps > 0$ there is an explicit lossless $(k,\eps)$-condenser with seed length $t=(1+1/\alpha) \log (\tn k/\eps)
  + O(1)$ and output length $\tl=t+(1+\alpha)k$.  Moreover, the condenser
  admits efficient list recovery. \qed
\end{thm}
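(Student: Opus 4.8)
The plan is to recover this quoted result via the Parvaresh--Vardy/GUV algebraic construction, established through exactly the list-decoding lens that Theorem~\ref{thm:list} already uses: first exhibit a code whose induced map is the condenser, then bound its agreement lists algebraically, and finally invoke the (standard) equivalence between small agreement lists and the lossless-condenser/expansion property --- the same equivalence that makes the condenser directly pluggable into Theorem~\ref{thm:main}.

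\emph{The construction.} Fix a prime power $q$ and an integer $h$, to be pinned down at the end, and view the input $x\in\zo^{\tn}$ as the coefficient vector of a univariate $f_x\in\F_q[Y]$ of degree less than $\tn/\log q$. Fix an irreducible $E(Y)\in\F_q[Y]$ of degree $\lceil\tn/\log q\rceil$, so $\F_q[Y]/(E)\cong\F_{q^{\deg E}}$ and every $f_x$ has degree $<\deg E$, and set $f_x^{(i)}:=f_x^{h^i}\bmod E$ for $i=0,\dots,s$. The condenser is
\[
  C(x,y):=\bigl(y,\,f_x^{(0)}(y),\,f_x^{(1)}(y),\,\ldots,\,f_x^{(s)}(y)\bigr)\in\F_q^{s+2},
\]
with seed $y\in\F_q$ (so $t=\log q$) and output length $\tl=(s+2)\log q=t+(s+1)t$. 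Taking $s+1\approx(1+\alpha)k/\log q$ forces $\tl=t+(1+\alpha)k$, and $q\approx(\tn k/\eps)^{1+1/\alpha}$ forces $t=(1+1/\alpha)\log(\tn k/\eps)+O(1)$; rounding to integers costs only the advertised $O(1)$. Since the seed is literally the first output coordinate, conditioning on it is free, so the condenser is automatically \emph{strong}.

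\emph{The agreement-list bound (the heart).} Let $\cC=\cC(C)$ be the induced code over $\Sigma=\F_q^{s+2}$, one coordinate per seed value, so its block length is $q$. Given a mixture $S$, interpolate a nonzero $Q(Y,Z_0,\dots,Z_s)$ with $\deg_Y Q<D$ and $\deg_{Z_i}Q<h$ vanishing at every $(y,w)$ with $w\in S_y$; a dimension count makes this possible as soon as $Dh^{s+1}>\wgt(S)$, i.e.\ $D\approx\wgt(S)/h^{s+1}$. Writing $g(Y):=Q(Y,f^{(0)}(Y),\dots,f^{(s)}(Y))$, any codeword $f\in\List_\cC(S,\cdot)$ whose agreement exceeds $(\deg g)/q$ makes $g$ have more roots than its degree permits, hence $g\equiv0$ in $\F_q[Y]$; since $\deg g\lesssim D+(s+1)h\deg E$, choosing $D$ and $h\deg E$ both small compared with $q$ keeps this threshold at the level $\rho(S)2^{\tl-k}+\eps$ demanded by Theorem~\ref{thm:list} (in the lossless regime $k'=k$). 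Now reduce mod $E$: with $\zeta:=Y\bmod E$ and $f^{(i)}(\zeta)=f(\zeta)^{h^i}$, the identity $g\equiv0$ says $z:=f(\zeta)\in\F_{q^{\deg E}}$ is a root of $R(Z):=Q^*(Z,Z^h,\dots,Z^{h^s})$, where $Q^*$ is $Q$ with $Y$-coefficients reduced mod $E$. The Parvaresh--Vardy argument shows $R\not\equiv0$: the lowest-$Z_0$-degree coefficient of $Q$ is a nonzero polynomial in $Y$ of degree $<D<\deg E$, hence nonzero modulo the irreducible $E$, and since $\deg_{Z_i}Q<h$ the substitutions $Z_i\mapsto Z^{h^i}$ occupy disjoint digit blocks base $h$, so no cancellation occurs. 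Hence $z$ takes at most $\deg R$ values and $\deg R<2h^{s+1}$; since $f\mapsto z$ is injective on polynomials of degree $<\deg E$, the agreement list has fewer than $2h^{s+1}$ elements. Choosing $h\approx q^{1/(1+\alpha)}$ gives $h^{s+1}\approx2^k$, so after a slight final tuning the list is below $2^k$, which is precisely the lossless requirement; turning this list bound into the condenser/expander statement is the standard converse of Theorem~\ref{thm:list} specialized to $k'=k$.

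\emph{Efficiency, and the main obstacle.} All of this is algorithmic: interpolating $Q$ is solving a linear system over $\F_q$ of dimension $\poly(q,h^{s+1})=\poly(2^t,2^{\tl})$, and recovering the roots $z$ of $R$ --- hence the codewords $f$, read off from $z\in\F_{q^{\deg E}}$ --- is univariate root-finding for a polynomial of degree $<2h^{s+1}=\poly(2^{\tl})$; so list recovery runs in $\poly(2^t,2^{\tl})$ time, the ``efficient list recovery'' clause. The genuinely delicate part is the simultaneous parameter balancing: $q,h,s$ must be chosen so that (i) the interpolation system has a nonzero solution, (ii) $\deg g$ is small enough that the required agreement is only $\rho(S)2^{\tl-k}+\eps$ rather than larger, (iii) $\deg R$ --- hence the list size --- stays below $2^k$ while the non-vanishing of $R$ survives, and (iv) $t$ and $\tl$ hit the stated closed forms with only $O(1)$ slack; threading all four at once, together with the digit-block argument for $R\not\equiv0$ and the irreducibility of $E$, is exactly the technical work done in \cite{ref:GUV09}.
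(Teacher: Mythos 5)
The paper does not prove this theorem; it is imported verbatim from \cite{ref:GUV09} (note the \qed immediately after the statement), so there is no internal proof to compare against. Your sketch correctly reconstructs the Guruswami--Umans--Vadhan argument as it actually appears in their paper: the Parvaresh--Vardy powering $f^{(i)}:=f^{h^i}\bmod E$, the interpolation of a low-degree $Q$, the vanishing of $g(Y):=Q(Y,f^{(0)}(Y),\ldots,f^{(s)}(Y))$ from high agreement, the reduction modulo the irreducible $E$ to get a univariate $R(Z)$, the base-$h$ digit-block argument that $R\not\equiv 0$, the injectivity of $f\mapsto f(\zeta)$, and the final parameter balancing $h\approx q^{1/(1+\alpha)}$ to force the list size under $2^k$ and to hit $t=(1+1/\alpha)\log(\tn k/\eps)+O(1)$, $\tl=t+(1+\alpha)k$. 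The passage from ``small agreement lists for all mixtures'' to ``lossless condenser'' is indeed the standard list-decoding-to-expansion equivalence (the converse direction of Theorem~\ref{thm:list}, proven for extractors in \cite{ref:TZ04} and for condensers in \cite{ref:GUV09}), as you state. A couple of small cosmetic points worth cleaning up if you were to write this out in full: the bound $\deg R<h^{s+1}$ is immediate (the maximal exponent is $(h-1)(1+h+\cdots+h^s)=h^{s+1}-1$), so the factor of $2$ is unnecessary; and the nonvanishing of $R$ needs only \emph{some} nonzero $\F_q[Y]$-coefficient of $Q$ of $Y$-degree $<\deg E$ rather than specifically the lowest-$Z_0$-degree one --- the disjoint digit blocks already prevent any cancellation among the monomials of $R$. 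These are presentation issues, not gaps; the route is the same one GUV take.
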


The code induced by the above condenser is precisely a list-decodable code due
to Parvaresh and Vardy~\cite{ref:PV05} (though with an unusual set-up
of the parameters). Thus, the efficient list
recovery algorithm of the condenser is merely the list-decoding
algorithm for this code\footnote{ For similar reasons, any
  construction of measurement matrices based on codeword graphs of
  algebraic codes that are equipped efficient soft-decision decoding
  (including the original Reed-Solomon based construction of Kautz and
  Singleton \cite{ref:KS64}) allow sublinear time
  reconstruction.}. Combined with Theorem~\ref{thm:main}, we can show
that codeword graphs of Parvaresh-Vardy codes correspond to good
measurement matrices that allow sublinear time recovery, but with
incomparable parameters to what we obtained from Trevisan's extractor
(the proof is similar to Corollary~\ref{coro:optcond}):

\begin{coro} \label{coro:guvcond} For positive integers $n \geq d$ and
  any constants $\delta, \alpha > 0$ there is an $m \times n$
  measurement matrix, where \[ m = O(d^{3+\alpha+2/\alpha} (\log
  n)^{2+2/\alpha}), \] that is $(\Omega(e), \Omega(e /d), \delta d,
  0)$-resilient for $d$-sparse vectors of length $n$, where \[ e :=
  (\log n)^{1+1/\alpha} d^{2+1/\alpha}.\] Moreover, the matrix allows
  for a reconstruction algorithm with running time $\poly(m)$. \qed
\end{coro}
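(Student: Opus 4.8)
The plan is to instantiate Theorem~\ref{thm:main} with the Parvaresh--Vardy-based lossless condenser quoted immediately above, exactly in the style of the proof of Corollary~\ref{coro:optcond}. First I would fix the constant $\delta > 0$ and choose the condenser's error $\eps$ to be a suitably small constant depending only on $\delta$ (say $\eps := \tfrac14 \delta/(1+\delta)$), and set the min-entropy $k$ so that $K := 2^k = d/(1 - c\eps)$ for the appropriate constant $c$, ensuring $K - d = \Theta(\delta d)$; this guarantees the $\delta d$ false-positive bound in the reconstruction once we verify the numerical condition $(\PI + \gamma)2^{\tl-k'} + \nu/\gamma < 1 - \eps$ of Theorem~\ref{thm:main}. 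Since the condenser is lossless, $k' = k$ and the overhead $2^{\tl-k}$ equals $2^{t}$, which is \emph{not} a constant here (unlike Corollary~\ref{coro:optcond}) --- this is the key difference and the source of the unusual parameters.

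Next I would compute $\tl$ and $T := 2^t$ from the condenser parameters $t = (1+1/\alpha)\log(\tn k/\eps) + O(1)$ and $\tl = t + (1+\alpha)k$. With $\tn = \log n$ and $2^k = \Theta(d)$, we get $T = 2^t = O\bigl((\log n \cdot d/\eps)^{1+1/\alpha}\bigr) = O((\log n)^{1+1/\alpha} d^{1+1/\alpha})$ up to constants absorbing $\eps$, and $L := 2^{\tl} = T \cdot 2^{(1+\alpha)k} = T \cdot \Theta(d^{1+\alpha})$. Then $\gamma = d/L = \Theta(1/(T d^{\alpha}))$, so $\nu/\gamma = \Theta(\nu T d^{\alpha})$ and $\gamma \cdot 2^{\tl - k} = \gamma \cdot T \cdot 2^{\alpha k} = \Theta(d/L \cdot L/d) \cdot$(lower-order) $= \Theta(1)$ --- more precisely $\gamma 2^{\tl-k} = d \cdot 2^{\alpha k}/2^{(1+\alpha)k} \cdot (2^t/2^t) = d/2^k = \Theta(1)$. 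So to make the condition $(\PI+\gamma)2^{\tl-k}+\nu/\gamma < 1-\eps$ hold, I set $\PI$ to be a small constant times $K/L = \Theta(1/(T d^{\alpha}))$ and $\nu$ a small constant times $\gamma^2 \cdot$(const) $= \Theta(1/(T d^{\alpha})^2)$; then $\PI m = \PI T L = \Omega(K) = \Omega(d^{?})$. To express the bounds cleanly in terms of a single quantity, I would define $e := \PI m = \Theta(L) = \Theta((\log n)^{1+1/\alpha} d^{2+1/\alpha})$ (matching the statement's $e$ up to constants), observe $e_1 = \nu m /d = \Omega(e/d)$, and note $m = TL = T \cdot e = \Theta((\log n)^{1+1/\alpha}d^{1+1/\alpha}) \cdot e = \Theta(d^{3+\alpha+2/\alpha}(\log n)^{2+2/\alpha})$, which is the claimed number of measurements.

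Finally, the running time: Theorem~\ref{thm:main}'s generic decoder runs in time $O(mn)$, but the remark following the Guruswami--Umans--Vadhan lossless-condenser theorem states that this condenser admits efficient list recovery, and the footnote observes that the induced code is (a curious parametrization of) the Parvaresh--Vardy code, whose list-decoding algorithm \emph{is} the list-recovery procedure. So the agreement list $\List_\cC(S, 1-\nu/\gamma)$ can be computed in time polynomial in $2^t, 2^{\tl}, 2^k, 1/\eps$ --- all of which are $\poly(m)$ by the above size estimates --- rather than by brute force over $2^{\tn} = n$ codewords; hence reconstruction in time $\poly(m)$.

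The main obstacle I anticipate is purely bookkeeping: tracking how the three free parameters $\PI, \nu, \gamma$ and the condenser error $\eps$ must scale with $d$ and $T$ so that the single inequality of Theorem~\ref{thm:main} is satisfied while simultaneously (i) keeping $K - d = \delta d$ exactly, (ii) getting the stated closed forms for $m$ and $e$, and (iii) ensuring $\PI, \nu$ are genuinely positive (not merely nonnegative). Because $T$ and $L$ themselves grow polynomially in $d$, the ``constant fraction'' tolerances $\PI, \nu$ of the extractor case degrade to $\PI = \Theta(1/(Td^\alpha))$ etc., which is exactly why the corollary states $\Omega(e)$ and $\Omega(e/d)$ rather than $\Omega(m)$ and $\Omega(m/d)$; making this trade-off transparent in the write-up, and choosing the cleanest definition of $e$ so that all four resilience parameters and $m$ come out as advertised, is the delicate part. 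Everything else is a direct transcription of the proof of Corollary~\ref{coro:optcond} with the non-constant overhead carried along.
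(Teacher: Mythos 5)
Your overall strategy is the right one and matches what the paper intends (the paper itself only remarks that ``the proof is similar to Corollary~\ref{coro:optcond}''): instantiate Theorem~\ref{thm:main} with the Parvaresh--Vardy lossless condenser, take $\eps$ a small constant depending on $\delta$, set $K := 2^k = \Theta(d)$ so that $K - d = \delta d$, and observe correctly that the overhead $2^{\tl-k}$ is no longer a constant. The running-time claim via the efficient list recovery of the induced Parvaresh--Vardy code is also correct. However, the middle of the argument contains arithmetic slips that only happen to cancel into the advertised closed forms, and at least one of them would change the result if taken at face value.

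First, $\nu = \Theta(\gamma^2)$ is too small. It does satisfy $\nu/\gamma < \text{const}$, but it yields $e_1 = \nu m/d = \Theta(\gamma^2 TL/d) = \Theta(\gamma T) = \Theta(dT/L) = \Theta(d^{-\alpha}) < 1$, i.e.\ the scheme would tolerate essentially no false negatives. To get the claimed $e_1 = \Omega(e/d) = \Omega(T)$ you must take $\nu = \Theta(\gamma)$ (a small constant times $\gamma$), so that $\nu/\gamma$ uses up part of the constant budget $\delta/(1+\delta)-\eps$. Second, the identification $e := \PI m = \Theta(L)$ is off by a factor of $d^{\alpha}$: since $\PI = \Theta(K/L)$ and $m = TL$, we have $\PI m = \Theta(KT) = \Theta(dT)$, whereas $L = \Theta(TK^{1+\alpha}) = \Theta(dT\cdot d^{\alpha})$. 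Consequently the step ``$m = TL = T \cdot e$'' is also off by $d^{\alpha}$; the reason you nevertheless wrote down the correct final exponents for $m$ and $e$ is that the two $d^{\alpha}$ errors cancel, which is luck rather than derivation. Third, and this is a genuine point of tension you should surface rather than paper over: the quoted condenser theorem gives $t = (1+1/\alpha)\log(\tn k/\eps)+O(1)$, and with $\tn = \log n$ and $k = \log 2^k \approx \log d$ (min-entropy measured in \emph{bits}), a literal reading gives $T = O\bigl((\log n \cdot \log d/\eps)^{1+1/\alpha}\bigr)$, not the $O\bigl((\log n\cdot d/\eps)^{1+1/\alpha}\bigr)$ you wrote (you substituted $2^k$ for $k$). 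Only your larger $T$ reproduces the exponents $d^{2+1/\alpha}$ and $d^{3+\alpha+2/\alpha}$ in the corollary; a literal reading of the quoted condenser theorem gives a smaller $m$ but also a smaller $e_0$ and $e_1$, so the corollary's $\Omega(e)$, $\Omega(e/d)$ claims would not follow. The inconsistency appears to live between the quoted theorem and the corollary rather than in your overall approach, but a careful write-up should note it explicitly rather than silently adopt the value of $T$ that makes the exponents come out as advertised.
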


We remark that we could also use a lossless condenser due to Ta-Shma
et al.\ \cite{ref:TUZ01} which is based on Trevisan's extractor
and also allows efficient list recovery, but it achieves inferior
parameters compared to Corollary~\ref{coro:guvcond}.

\subsection{Connection with
  List-Recoverability} \label{sec:listrec}

Extractor codes that we used in Theorem~\ref{thm:main} are instances
of \emph{soft-decision decodable} codes\footnote{To be precise, here
  we are dealing with a special case of soft-decision decoding with
  binary weights.}  that provide high list-decodability in ``extremely
noisy'' scenarios. In fact it is not hard to see that good extractors
or condensers are required for our construction to carry through, as
Theorem~\ref{thm:list} can be shown to hold, up to some loss in parameters, in the reverse direction
as well (as already shown by Ta-Shma and Zuckerman \cite{ref:TZ04}*{Theorem~1} for the case of extractors).

However, for designing measurement matrices for the noiseless (or
low-noise) case, it is possible to resort to the slightly weaker
notion of \emph{list recoverable codes}. Formally, a code $\cC$ of
block length $\tn$ over an alphabet $\Sigma$ is called \emph{$(\alpha,
  d, \tl)$-list recoverable} if for every mixture $S$ over
$\Sigma^\tn$ consisting of sets of size at most $d$ each, we have
$|\List_\cC(S, \alpha)| \leq \tl$.  A simple argument essentially repeating the proof of
Theorem~\ref{thm:main} shows that the adjacency matrix of the codeword
graph of such a code with rate $R$ gives a $(\log n)|\Sigma|/R \times
n$ measurement matrix\footnote{For codes over large alphabets, the
  factor $|\Sigma|$ in the number of rows can be improved using
  \emph{concatenation} with a suitable \emph{inner} measurement
  matrix.}  for $d$-sparse vectors in the noiseless case with at most
$\tl-d$ false positives in the reconstruction.

Ideally, a list-recoverable code with $\alpha=1$, alphabet size
$O(d)$, positive constant rate, and list size $\tl=O(d)$ would give an
$O(d \log n) \times n$ matrix for $d$-sparse vectors, which is almost
optimal (furthermore, the recovery would be possible in sublinear time
if $\cC$ is equipped with efficient list recovery). However, no
explicit construction of such a code is so far known.

Two natural choices of codes with good list-recoverability properties
are Reed-Solomon and Algebraic-Geometric codes, which in fact provide
soft-decision decoding with short list size (cf.\ \cite{ref:Venkat}).
However, while the list size is polynomially bounded by $\tn$ and $d$,
it can be much larger than $O(d)$ that we need for our application
even if the rate is polynomially small in $d$.

On the other hand, it is shown in \cite{ref:GR08} that \emph{folded
  Reed-Solomon Codes} are list-recoverable with constant rate, but
again they suffer from large alphabet and list size\footnote{As shown
  in \cite{ref:GUV09}, folded Reed-Solomon codes can be used to
  construct lossless condensers, which eliminates the list size
  problem.  They nevertheless give inferior parameters compared to
  Parvaresh-Vardy codes used in Corollary~\ref{coro:guvcond}.}.
  
We also point out a construction of $(\alpha, d, d)$ list-recoverable
codes (allowing list recovery in time $O(\tn d)$) in \cite{ref:GR08}
with polynomially small rate but alphabet size exponentially large in
$d$, from which they obtain superimposed codes. Thus this code allows
for exact recovery of sparse vectors (and in particular, results in a
disjunct matrix) but is not favorable in terms of the number of measurements.

\subsection{Connection with the Bit-Probe Model and
  Designs} \label{sec:bitprobe}

An important problem in data structures is the static set membership
problem in the bit-probe model, which is the following:
Given a set $S$ of at most $d$ elements from a universe of size $n$,
store the set as a string of length $m$ such that any query of the
type ``is $x$ in $S$?''  can be reliably answered by reading few bits
of the encoding.  The query algorithm might be probabilistic, and be
allowed to err with a small one or two-sided error. Information
theoretically, it is easy to see that $m=\Omega(d \log (n/d))$
regardless of the bit-probe complexity and even if a small constant
error is allowed.

Remarkably, it was shown in \cite{ref:BMRV02} that the lower bound on
$m$ can be (non-explicitly) achieved using only one
bit-probe. Moreover, a part of their work shows that any one-probe
scheme with negative one-sided error $\eps$ (where the scheme only
errs in case $x \notin S$) reduces to a $\lfloor d/\eps
\rfloor$-superimposed code (and hence, requires $m=\Omega(d^2 \log n)$
by \cite{ref:DR83}). It follows that from any such scheme one can
obtain a measurement matrix for exact reconstruction of sparse
vectors, which, by Lemma~\ref{lem:distance}, cannot provide high
resiliency against noise. The converse direction, i.e., using
superimposed codes to design bit-probe schemes does not necessarily
hold unless the error is allowed to be very close to~$1$. However, in
\cite{ref:BMRV02} \emph{combinatorial designs}\footnote{ A design is a
  collection of subsets of a universe, all of the same size, such
  that the pairwise intersection of any two subset is upper bounded by
  a prespecified parameter.}  based on low-degree polynomials are used
to construct one bit-probe schemes with $m=O(d^2 \log^2 n)$ and small
one-sided error.

On the other hand, Kautz and Singleton \cite{ref:KS64} observed that
the encoding of a combinatorial design as a binary matrix corresponds
to a superimposed code (which is in fact slightly
error-resilient). Moreover, they used Reed-Solomon codes to construct
a design, which in particular gives a $d$-superimposed code. This is
in fact the same design that is used in \cite{ref:BMRV02}, and in our
terminology, can be regarded as the adjacency matrix of the codeword
graph of a Reed-Solomon code.

It is interesting to observe the close similarity between our
framework given by Theorem~\ref{thm:main} and classical constructions
of superimposed codes. However, some key differences are worth
mentioning.  Indeed, both constructions are based on codeword graphs
of error-correcting codes. However, classical superimposed codes owe
their properties to the large distance of the underlying code. On the
other hand, our construction uses extractor and condenser codes and
does not give a superimposed code simply because of the substantially
low number of measurements (unless a lossless condenser with
sub-constant error is used). However, as shown in
Theorem~\ref{thm:main}, they are good enough for a slight relaxation
of the notion of superimposed codes because of their soft-decision
list decodability properties, which additionally enables us to attain
high noise resilience and a considerably smaller number of
measurements.

Interestingly, Buhrman et al.\ \cite{ref:BMRV02} use randomly
chosen bipartite graphs to construct storage schemes with two-sided
error requiring nearly optimal space $O(d \log n)$, and Ta-Shma
\cite{ref:Ta02} later shows that expander graphs from lossless
condensers would be sufficient for this purpose. However, unlike
schemes with negative one-sided error, these schemes use encoders that
cannot be implemented by the ``or'' function and thus do not translate
to group testing schemes.

\subsection*{Acknowledgment}

The author is thankful to Amin Shokrollahi for introducing him to the
group testing problem and his comments on an earlier draft of this paper.
He also thanks Venkatesan Guruswami for several
illuminating discussions that led to considerable improvement of the
results presented in this work, 
and Chris Umans for his feedback on earlier drafts that improved the
presentation.

\bibliography{fullTesting}

\appendix

   \section{Technical Details of the Proof of Lemma~\ref{lem:falseNeg}} \label{app:proofs}

   \newcommand{\Deg}{\mathsf{deg}}

   For a positive integer $c > 1$, define a $c$-hypergraph as a tuple
   $(V, E)$, where $V$ is the set of vertices and $E$ is the set of
   hyperedges such that every $e \in E$ is a subset of $V$ of size
   $c$. The degree of a vertex $v \in V$, denoted by $\Deg(v)$, is the
   size of the set $\{e \in E\colon v \in E\}$.  Note that $|E| \leq
   \binom{|V|}{c}$ and $\Deg(v) \leq \binom{|V|}{c-1}$.  The
   \emph{density} of the hypergraph is given by $|E| /
   \binom{|V|}{c}$.  A \emph{vertex cover} on the hypergraph is a
   subset of vertices that contains at least one vertex from every
   hyperedge. A \emph{matching} is a set of pairwise disjoint
   hyperedges. It is well known that any dense hypergraph must have a
   large matching. Below we reconstruct a proof of this claim.

   \begin{prop} \label{prop:matching} Let $H$ be a $c$-hypergraph such
     that every vertex cover of $H$ has size at least $k$. Then $H$
     has a matching of size at least $k/c$.
   \end{prop}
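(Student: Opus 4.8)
The plan is to extract a maximal matching greedily and then argue that the vertices it touches form a vertex cover, forcing the matching to be large. First I would let $M = \{e_1, \ldots, e_s\}$ be a maximal matching in $H$ (maximal under inclusion, not necessarily maximum — just take any matching to which no further hyperedge can be added disjointly). Let $U := e_1 \cup \cdots \cup e_s$ be the set of all vertices covered by $M$; since the $e_i$ are pairwise disjoint $c$-sets, we have $|U| = cs$.

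The key observation is that $U$ is a vertex cover of $H$. Indeed, if some hyperedge $e \in E$ were disjoint from $U$, then $M \cup \{e\}$ would be a strictly larger matching, contradicting maximality of $M$. Hence every hyperedge meets $U$, so $U$ is a vertex cover.

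By hypothesis every vertex cover of $H$ has size at least $k$, so $cs = |U| \geq k$, which gives $s \geq k/c$. Thus $M$ is a matching of size at least $k/c$, as claimed. The argument is essentially a one-line duality-type observation, so I do not anticipate any real obstacle; the only thing to be careful about is the distinction between ``maximal'' and ``maximum'' matchings — we only need a maximal one, which exists trivially by iteratively adding disjoint hyperedges until stuck.
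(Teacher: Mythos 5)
Your proof is correct and is essentially identical to the paper's: take a maximal matching, observe that the set of vertices it touches is a vertex cover (else maximality would be violated), and conclude $c|M|\geq k$.
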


\begin{proof}
  Let $M$ be a maximal matching of $H$, i.e., a matching that cannot
  be extended by adding further hyperedges. Let $C$ be the set of all
  vertices that participate in hyperedges of $M$. Then $C$ has to be a
  vertex cover, as otherwise one could add an uncovered hyperedge to
  $M$ and violate maximality of $M$. Hence, $c|M| = |C| \geq k$, and
  the claim follows.
\end{proof}

\begin{lem} \label{lem:denseGraph} Let $H=(V, E)$ be a $c$-hypergraph
  with density at least $\eps > 0$. Then $H$ has a matching of size at
  least $\frac{\eps}{c^2} (|V|-c+1)$.
\end{lem}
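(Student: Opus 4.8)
The plan is to reduce Lemma~\ref{lem:denseGraph} to Proposition~\ref{prop:matching} by showing that a $c$-hypergraph of density at least $\eps$ cannot have too small a vertex cover. Since Proposition~\ref{prop:matching} says a matching of size at least $k/c$ exists whenever every vertex cover has size at least $k$, it suffices to prove that every vertex cover of $H$ has size at least roughly $\frac{\eps}{c}(|V|-c+1)$; plugging this $k$ into the proposition yields the claimed bound $\frac{\eps}{c^2}(|V|-c+1)$.

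First I would fix an arbitrary vertex cover $C \subseteq V$ and bound the number of hyperedges it can cover. Every hyperedge meets $C$, so $|E|$ is at most the number of $c$-subsets of $V$ that intersect $C$, which is $\binom{|V|}{c} - \binom{|V|-|C|}{c}$. Combining with the density hypothesis $|E| \geq \eps \binom{|V|}{c}$ gives
\[
\eps \binom{|V|}{c} \leq \binom{|V|}{c} - \binom{|V|-|C|}{c},
\]
i.e. $\binom{|V|-|C|}{c} \leq (1-\eps)\binom{|V|}{c}$. The next step is to turn this into a lower bound on $|C|$. Writing $n := |V|$ and $s := |C|$, the ratio $\binom{n-s}{c}/\binom{n}{c} = \prod_{i=0}^{c-1}\frac{n-s-i}{n-i}$ is at most $\left(\frac{n-s}{n-c+1}\right)^c$ when $s \geq 0$ (each factor $\frac{n-s-i}{n-i}$ is at most $\frac{n-s-c+1}{n-c+1} \le \frac{n-s}{n-c+1}$ after checking the monotonicity in $i$), so it is enough to have $\left(\frac{n-s}{n-c+1}\right)^c \ge 1-\eps$ fail, i.e. the cover bound follows once $\frac{n-s}{n-c+1} < (1-\eps)^{1/c}$. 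A cleaner route avoiding $c$-th roots: use the crude estimate $\binom{n-s}{c} \geq \binom{n}{c}\bigl(1 - \frac{cs}{n-c+1}\bigr)$ (each of the $c$ factors drops by a relative amount at most $\frac{s}{n-c+1}$, and $\prod(1-a_i)\ge 1-\sum a_i$), which combined with $\binom{n-s}{c} \leq (1-\eps)\binom{n}{c}$ forces $\frac{cs}{n-c+1} \geq \eps$, hence $s \geq \frac{\eps}{c}(n-c+1)$.

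Having established that every vertex cover has size at least $k := \frac{\eps}{c}(|V|-c+1)$, I would invoke Proposition~\ref{prop:matching} directly to conclude that $H$ has a matching of size at least $k/c = \frac{\eps}{c^2}(|V|-c+1)$, as desired. The main obstacle is the elementary-but-fiddly binomial inequality in the middle step: one must be careful that the estimate $\prod_{i=0}^{c-1}(1-a_i) \ge 1-\sum_{i=0}^{c-1} a_i$ is applied with the right $a_i = \frac{s+i}{\,?\,}$ bookkeeping and that edge cases ($s$ close to $n$, or $s > n-c$ so that $\binom{n-s}{c}=0$) are handled — though in those cases the bound $s \ge \frac{\eps}{c}(n-c+1)$ holds trivially since the left side is already large. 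No deep idea is needed beyond the reduction to vertex covers; the whole content is packaging the density hypothesis into a counting bound on cover size.
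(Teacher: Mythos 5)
Your proof is correct and follows the same overall strategy as the paper: reduce to Proposition~\ref{prop:matching} by showing every vertex cover of $H$ is large. The only real difference is in how you establish the lower bound on cover size. You bound $|E|$ by the number of $c$-subsets meeting the cover, namely $\binom{|V|}{c}-\binom{|V|-|C|}{c}$, and then invoke the inequality $\binom{n-s}{c}\geq\binom{n}{c}\bigl(1-\tfrac{cs}{n-c+1}\bigr)$ to extract $|C|\geq\frac{\eps}{c}(|V|-c+1)$. The paper's argument is a more direct double-counting of incidences: every hyperedge is incident to at least one vertex in $C$, so $|E|\leq\sum_{v\in C}\Deg(v)\leq|C|\binom{|V|}{c-1}$, which combined with the density hypothesis immediately gives the same lower bound on $|C|$ without any binomial manipulation. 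Both are sound; the paper's version is just shorter because it does not need Bernoulli's inequality or the edge-case analysis for $|C|>|V|-c$. One small slip in your parenthetical comment for the first (discarded) route: the factor $\frac{n-s-i}{n-i}$ is \emph{decreasing} in $i$, so its maximum over $i\in\{0,\dots,c-1\}$ is $\frac{n-s}{n}$, not $\frac{n-s-c+1}{n-c+1}$; the final upper bound $\frac{n-s}{n-c+1}$ is still valid (since both $\frac{n-s}{n}$ and $\frac{n-s-c+1}{n-c+1}$ are at most it), and in any case this does not affect the ``cleaner route'' you actually use, which is correct.
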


\begin{proof}
  For every subset $S \subseteq V$ of size $c$, denote by
  $\mathds{1}(S)$ the indicator value of $S$ being in $E$.  Let $C$ be
  any vertex cover of $H$. Denote by $\cS$ the set of all subsets of
  $V$ of size $c$. Then we have
  \[
  \eps \binom{|V|}{c} \leq \sum_{S \in \cS} \mathds{1}(S) \leq \sum_{v
    \in C} \Deg(v) \leq |C| \binom{|V|}{c-1}.
  \]
  Hence, $|C| \geq \eps(n-c+1)/c$, and the claim follows using
  Proposition~\ref{prop:matching}.
\end{proof}

\end{document}